\newcommand{\be}{\begin{eqnarray}}
\newcommand{\ee}{\end{eqnarray}}
\newcommand{\bez}{\begin{eqnarray*}}
\newcommand{\eez}{\end{eqnarray*}}
\renewcommand{\d}{\mathrm{d}}
\newcommand{\bd}{\bar{\mathrm{d}}}
\newcommand{\cA}{\mathcal{A}}
\newcommand{\bbN}{\mathbb{N}}
\theoremstyle{plain}
\newtheorem{theorem}{Theorem}[section]
\newtheorem{proposition}[theorem]{Proposition}
\theoremstyle{definition}
\newtheorem{remark}[theorem]{Remark}
\newtheorem{example}[theorem]{Example}
\numberwithin{equation}{section}
\numberwithin{theorem}{section}
\renewcommand{\theequation} {\arabic{section}.\arabic{equation}}
\begin{document}

\title{\bf Matrix Boussinesq solitons and their tropical limit}

\author{ \sc
 Aristophanes Dimakis$^1$, Folkert M\"uller-Hoissen$^{2,3}$, Xiao-Min Chen$^{2,4}$ \\ \small
 $^1$ Dept. of Financial and Management Engineering, University of the Aegean, Chios, Greece \\  \small
 $^2$ Max Planck Institute for Dynamics and Self-Organization, G\"ottingen, Germany \\ \small
 $^3$ Institute for Nonlinear Dynamics, Georg August University,
      37077 G\"ottingen, Germany \\ \small
 $^4$ College of Applied Sciences, Beijing University of Technology, Beijing 100124,
      People's Republic of China
}

\date{ }

\maketitle

\begin{abstract}
We study soliton solutions of matrix ``good'' Boussinesq equations, generated via a binary Darboux 
transformation. Essential features of these solutions are revealed 
via their ``tropical limit'', as also exploited in previous work about the KP equation. 
This limit associates a point particle interaction picture with a soliton (wave) solution. 
\end{abstract}

\section{Introduction}
\label{sec:intro}
The (scalar) Boussinesq equation originated from the study of water waves propagating in a canal 
\cite{Bous1872}. In this work we study the ``good'' Boussinesq equation, in which the highest 
derivative term has the opposite sign, as compared to the original Boussinesq equation. 
It also appears as a continuum limit of certain nonlinear atomic chains, see \cite{Pnev85}, for example. 
From a mathematical point of view, it belongs to the main examples of completely integrable PDEs. 
It is the second Gelfand-Dickey reduction \cite{Dick03} of the KP-II equation, the famous Korteweg-deVries (KdV) 
equation being the first. 
Compared with the latter, the behavior of its soliton solutions is considerably more diverse, 
in particular they can move in both directions, experience head-on collision, and solitons 
can split or merge (cf. \cite{Bogd+Zakh02,Rasi+Schi17}, also see the references cited in these papers). 
The scalar good Boussinesq equation is also known as ``nonlinear string equation'', see e.g. 
\cite{FST83,Lamb+Muse89}. 

In the KdV and Boussinesq case, a contour plot of a soliton solution displays a localization along a 
piecewise linear graph in two-dimensional space-time.
The definition of the ``tropical limit'' makes this precise and yields a method to compute it. 
Restricting the dependent variable to this graph, completes the tropical limit of the soliton solution, 
which displays the essentials of soliton interactions in a clear way. It is a very convenient tool 
to describe and to classify soliton solutions. 
The tropical limit associates with a soliton solution a point particle picture, also in the interaction 
region of solitons, revealing ``virtual solitons'' (borrowing a familiar notion from perturbative 
quantum field theory). 

In this work we address more generally the following $m \times n$ matrix potential Boussinesq equation, 
\be
  \phi_{tt} - 4 \beta  \phi_{xx} + \frac{1}{3} \phi_{xxxx} + 2 \left(\phi_x K \phi_x \right)_x
  - 2 \left( \phi_x K \phi _t  - \phi_t K \phi _x \right) = 0 \, ,   \label{matrixBSq_K}
\ee
where $K$ is a constant $n \times m$ matrix, $\beta >0$ (chosen as $\beta =1/4$ in  
all plots in this work), and a subscript indicates a partial derivative of $\phi$ with respect 
to the respective variable $x$ or $t$. We will refer to this equation as potential Bsq$_K$. 
It is the second Gelfand-Dickey reduction of the following KP-II equation, in a moving frame 
($x \mapsto x - 3 \beta t_3$, $t_2 =t$),
\bez
 -\frac{4}{3} \left(\phi_{t_3} + 3 \beta  \phi _x\right)_x + \phi_{tt} + \frac{1}{3} \phi_{xxxx}
 + 2 \left(\phi_x K \phi _x \right)_x
 - 2 \left( \phi_x K \phi_t - \phi_t K \phi_x \right) = 0 \, .
\eez
In terms of $u = 2 \phi_x$, we obtain the $m \times n$ matrix Boussinesq equation, or rather 
Bsq$_K$,\footnote{For the scalar Boussinesq equation, $x \mapsto -x$ and $t \mapsto -t$ are symmetries. 
In the matrix case, the first is still a symmetry, but $t \mapsto -t$ has to be accompanied by $u \mapsto u^T$.
}
\bez
  u_{tt} - 4 \beta  u_{xx} + \frac{1}{3} u_{xxxx} + (u K u)_{xx}
  - \left( u K \left(\partial^{-1}u\right)_t - \left(\partial^{-1}u\right)_t K u \right)_x = 0 \, .
\eez
 For a solution $u$ of the vector Boussinesq equation, where $n=1$ or $m=1$, $K u$, 
respectively $u K$, is a solution of the scalar Boussinesq equation. If $m,n>1$, we define the 
tropical limit graph via that of the scalar $\mathrm{tr}(K u)$, which is not in general a solution of the  
scalar Boussinesq equation. Our explorations and results fully substantiate this approach.  

Our analysis largely parallels that in \cite{DMH18}, where we explored line soliton solutions of the 
matrix KP-II equation in the tropical limit. There we concentrated on a class of solutions which we 
called ``pure solitons''. Another class has been treated in \cite{DMH18p}. Whereas pure solitons exhaust 
the solitons of the KdV reduction, this is not so for the Boussinesq reduction. 

Matrix versions of scalar integrable equations are natural extensions and of interest as models of 
coupled systems. A further motivation to explore them originated from the fact that in 2-soliton scattering,  
in particular in case of the matrix KdV \cite{Vese03,Gonc+Vese04} and the vector 
Nonlinear Schr\"odinger (NLS) equation \cite{Tsuch04,APT04IP}, in- and outgoing polarizations (values of 
the dependent variable attached to in- and outgoing solitons) are related 
by a Yang-Baxter map, a solution of the ``functional'' or ``set-theoretic'' version of the famous Yang-Baxter equation. 
For a matrix KP equation, a Yang-Baxter map is \emph{not} sufficient to fully describe the situation \cite{DMH18p}. 
It seems that we also have to go beyond Yang-Baxter in case of the second member  
in the family of Gel'fand-Dickey reductions of KP, which is the matrix Boussinesq equation. 

The (quantum) Yang-Baxter equation is well-known to express integrability in two-dimensional quantum field theory 
and exactly solvable models of statistical mechanics. In the present work, as also e.g. in 
\cite{DMH18,Vese03,Gonc+Vese04,Tsuch04,APT04IP}, we meet it in a classical context. Formally, however,
soliton waves may be regarded as a sort of quantization of the point particles constituting 
the tropical limit. 

In Section~\ref{sec:Bouss_bDT} we present a binary Darboux transformation for the potential Bsq$_K$
equation (\ref{matrixBSq_K}). Appendix~\ref{app:DBT} explains its origin from a general result in bidifferential 
calculus \cite{DMH13SIGMA,CDMH16}. In this work we concentrate on the case of vanishing seed solution. 
This leads to two cases, treated in Sections~\ref{sec:C'=-C} and \ref{sec:C'_noteq_-C}. 
The soliton solutions, obtained via the binary Darboux transformation, depend on parameters that have to 
be roots of a cubic equation. We introduce a convenient parametrization of these roots (see Section~\ref{subsec:param}) 
that greatly facilitates the further analysis. 
Section~\ref{sec:concl} contains some concluding remarks.

\section{A binary Darboux transformation for the matrix Boussinesq equation}
\label{sec:Bouss_bDT}
The following binary Darboux transformation is a special case of a general result in 
bidifferential calculus, see Appendix~\ref{app:DBT}. 
Let $N \in \bbN$. 
The integrability condition of the linear system
\be
   && \theta_t = \theta_{xx} + 2 \phi _{0,x} K \theta \, , \nonumber \\
   && \theta_{xxx} = 3 \beta \, \theta_x + \theta \, C - 3 \phi_{0,x} K \theta_x 
      - \frac{3}{2} \left( \phi_{0,t} + \phi_{0,xx} \right) K \theta \, ,   \label{Bsq_lin_sys}
\ee
where $\theta$ is an $m \times N$ and $C$ a constant $N \times N$ matrix, is the potential 
Bsq$_K$ equation for $\phi_0$. The same holds for the adjoint linear system
\be
   &&  \chi_t = -\chi_{xx} - 2 \chi K \phi_{0,x} \, , \nonumber \\
   &&  \chi_{xxx} = 3 \beta \, \chi_x + C' \chi - 3 \chi_x K \phi_{0,x}
      + \frac{3}{2} \chi K \left(\phi_{0,t} - \phi _{0,xx} \right) \, ,  \label{Bsq_adj_lin_sys}
\ee
where $\chi$ is an $N \times n$ and $C'$ a constant $N \times N$ matrix, in general different from $C$. 
The Darboux potential $\Omega$ satisfies the consistent $N \times N$ system of equations
\be
   && \Omega_x = -\chi K \theta \, , \nonumber \\
   && \Omega_t = - \chi K \theta_x + \chi_x K \theta \, , \nonumber \\
   && C' \Omega + \Omega C = - \chi K \theta_{xx} + \chi_x K \theta_x
      - \chi_{xx} K \theta  + 3 \beta \chi K \theta  
      - 3 \chi K \phi_{0,x} K \theta \, .     \label{Bsq_Omega}
\ee
At space-time points where $\Omega$ is invertible,  
\be
    \phi = \phi_0 - \theta \Omega^{-1} \chi    \label{BDT_new_solution}
\ee
is then a new solution of the potential Bsq$_K$ equation.

\begin{remark}
Taking the transpose of the above equations, and applying the substitution $t \mapsto -t$, 
we see that, besides $K \mapsto K^T$, we also obtain $C' \leftrightarrow C^T$ and 
$\theta \leftrightarrow \chi^T$. 
We also note that $x \mapsto -x$, $\phi_0 \mapsto - \phi_0$, 
$C \mapsto -C$, $C' \mapsto -C'$ is a symmetry of the linear systems. 
\end{remark}

\begin{remark}
\label{rem:BDT_transf}
The equations (\ref{Bsq_lin_sys}) - (\ref{BDT_new_solution}) are invariant under a transformation
\bez
    \theta \mapsto \theta \, A \, , \qquad 
    \chi \mapsto B \, \chi \, , \qquad
    C \mapsto A^{-1} \, C \, A \, , \qquad
    C' \mapsto B \, C' \, B^{-1} \, , \qquad
    \Omega \mapsto B \, \Omega \, A \, ,
\eez
with any invertible constant $N \times N$ matrices $A$ and $B$. 
\end{remark}

Using (\ref{BDT_new_solution}) and the first of (\ref{Bsq_Omega}), we find
\bez
    \mathrm{tr}( K \phi) &=& \mathrm{tr}(K \phi_0) - \mathrm{tr}(K \theta \, \Omega^{-1} \chi) 
                 = \mathrm{tr}(K \phi_0) - \mathrm{tr}(\chi K \theta \, \Omega^{-1})
                 = \mathrm{tr}(K \phi_0) + \mathrm{tr}(\Omega_x \, \Omega^{-1}) \nonumber \\
                 &=& \mathrm{tr}(K \phi_0) + (\log \det \Omega)_x \, .
\eez
Hence
\be
    \mathrm{tr}( K u ) - \mathrm{tr}(K u_0) = 2 \, (\log \det \Omega)_{xx} \, .
       \label{trKu-}
\ee
Such a formula is familiar in the scalar case, where $\det \Omega$ is the Hirota $\tau$-function. 
But we will see that, also in the matrix case, $\det \Omega$ plays a crucial role. In the following we will still 
call it $\tau$, after multiplication by a convenient factor, which preserves the relation (\ref{trKu-}).

\subsection{Solutions for vanishing seed}
The linear system with $\phi_0=0$ reads
\bez
   \theta_t = \theta_{xx} \, , \qquad
   \theta_{xxx} = 3 \beta  \theta_x + \theta C \, .
\eez
It possesses solutions of the form
\be
    \theta = \sum_a \theta_a \, e^{\vartheta(P_a)} \, ,   \label{theta}
\ee
where 
\be
     \vartheta(P) = P \, x + P^2 \, t \, ,   \label{Bouss_phase}
\ee
and each $P_a$ is a solution of the cubic equation
\be
    P_a^3 = 3 \beta P_a + C \, .   \label{P_sol_cubic_eq}
\ee
The index $a$ runs over any number of distinct roots.  

Correspondingly, the adjoint linear system takes the form
\bez
    \chi_t = -\chi_{xx} \, , \qquad
    \chi_{xxx} = 3 \beta \chi_x + C' \chi \, ,
\eez
which is solved by
\be
    \chi = \sum_{b} e^{- \vartheta(Q_b)} \, \chi_b    \label{chi}
\ee
if $Q_b$ solves the cubic equation
\be
    Q_b^3 = 3 \beta  Q_b - C' \, .  \label{Q_sol_cubic_eq}
\ee
The equations for the Darboux potential $\Omega$ are reduced to
\bez
 \Omega_x = -\chi K \theta \, , \quad
 \Omega_t = - \chi K \theta_x + \chi_x K \theta \, , \quad
 C' \Omega + \Omega C = - \chi K \theta_{xx} + \chi_x K \theta_x
      - \chi_{xx} K \theta  + 3 \beta \chi K \theta \, .
\eez
Writing
\be
   \Omega = \Omega _0 + \sum_{a,b} e^{-\vartheta(Q_b)} W_{ba} \, e^{\vartheta(P_a)} \, ,  \label{Omega_ansatz}
\ee
these equations are solved if $W_{ba}$ satisfies the Sylvester equation
\be
    Q_b W_{ba} - W_{ba} P_a = \chi_b K \theta_a \, ,   \label{Sylvester}
\ee
and if the constant matrix $\Omega_0$ is subject to
\be
    \Omega_0 C' + C \Omega_0 = 0 \, .   \label{Omega_0-Sylv_eq}
\ee
As a consequence of the last condition, there are two major cases. In Section~\ref{sec:C'=-C} 
we will address the case where $C' = -C$. Section~\ref{sec:C'_noteq_-C} then deals with 
the complement. 

We note that (\ref{trKu-}) reduces to
\be
    \mathrm{tr}(K u) = 2 \, (\log \det \Omega)_{xx} \, . \label{trKu}
\ee

\section{The case $C' = -C$}
\label{sec:C'=-C}
If $\Omega_0$ is invertible, Remark~\ref{rem:BDT_transf} shows that without restriction of generality we 
can choose $\Omega_0 = I_N$, the $N \times N$ identity matrix. (\ref{Omega_0-Sylv_eq}) then implies $C' = -C$. 
The remaining freedom of transformations, according to Remark~\ref{rem:BDT_transf}, is then given by  
transformations with $B = A^{-1}$. The similarity transformation $C \mapsto A^{-1} C A$ now allows us to 
assume that $C$ has Jordan normal form. 

$P_a$ and $Q_b$ are now solutions of the same cubic equation, so we can set 
\bez
      Q_a = P_a \, , 
\eez      
and the Sylvester equation takes the form 
\bez
    P_a W_{ab} - W_{ab} P_b = \chi_a K \theta_b \, .
\eez
If $a \neq b$ and $P_a$ and $P_b$ have disjoint spectrum, it is well-known that there is a solution and 
it is unique. In this case, the sum in the expression for $\theta$ or $\chi$ is over a disjoint set 
of solutions of the cubic equation. 

We will restrict our considerations to \emph{diagonal} matrices\footnote{A treatment of the case where $C$ contains 
larger than size 1 Jordan blocks is left aside in this work.}
\bez
    P_a = \mathrm{diag}(p_{1,a}, \ldots,p_{N,a}) \, , \qquad
    C = \mathrm{diag}(c_1,\ldots,c_N) \, . 
\eez
(\ref{P_sol_cubic_eq}) then requires
\be
    p_{ia}^3 = 3 \beta \, p_{ia} + c_i \qquad \quad i=1,\ldots,N \, .  \label{cubic_eq}
\ee
We will only consider \emph{real} roots. 
Writing
\be
  \chi_a = \left( \begin{array}{c}
            \eta_{1,a} \\ \vdots  \\ \eta_{N,a} 
                   \end{array}
            \right) \, , \qquad
  \theta_a = \left( \begin{array}{ccc}
            \xi_{1,a} & \cdots  & \xi_{N,a} 
                     \end{array}
             \right) \, ,            \label{chi,theta_decomp}
\ee
and $W_{ab} = (W_{ab,ij})$, we find
\bez
      W_{ab,ij} = \frac{\eta_{ia} K \xi_{jb}}{p_{ia}-p_{jb}} \qquad a \neq b \, , \quad
      p_{ia} \neq p_{jb} \, , \quad i,j=1,\ldots,N \, ,
\eez
and thus
\bez
  \Omega_{ij} = \delta_{ij} 
     + \sum_{ \substack{ a,b \\ a \neq b } } \frac{\eta_{ib} K \xi_{ja}}{ p_{ib}-p_{ja} } 
        \, e^{\vartheta(p_{ja})-\vartheta(p_{ib})} \qquad \quad i,j=1,\ldots,N  \, .
\eez

\subsection{A parametrization of the roots of the cubic equation}
\label{subsec:param}
We are only interested in \emph{real} soliton solutions, hence we restrict our analysis 
to \emph{real} roots of the cubic equation and demand that there are at least two different 
ones. This requires $|c_i| \leq 2 \beta^{3/2}$. 
We can then express the constants $c_i$ as follows,
\be
    c_i = 2 \beta ^{3/2} \frac{1 - 45 \lambda_i^2 + 135 \lambda_i^4 - 27 \lambda_i^6}{(1 + 3 \lambda_i^2)^3}
         \, ,  \label{c(lambda)}
\ee
where $\lambda_i$ are real parameters. 
The roots of the cubic equation (\ref{cubic_eq}) are then given by
\be
    p_{i,1}=-\frac{\sqrt{\beta } \left(1+6 \lambda_i-3 \lambda_i^2\right)}{1+3 \lambda_i^2} \, , \quad
    p_{i,2}=-\frac{\sqrt{\beta } \left(1-6 \lambda_i-3 \lambda_i^2\right)}{1+3 \lambda_i^2} \, , \quad
    p_{i,3}=\frac{2 \sqrt{\beta } \left(1-3 \lambda_i^2\right)}{1+3 \lambda_i^2}  \, .
       \label{roots}
\ee
All the roots satisfy $p^2 \leq 4 \beta$. We have $\lim_{|\lambda_i|\to\infty} p_{ia} = \sqrt{\beta}$ for $a=1,2$, 
and $\lim_{|\lambda_i|\to\infty} p_{i,3} = -2 \sqrt{\beta}$. Also see Fig.~\ref{fig:roots}.
The two involutive transformations 
\be
    \lambda \mapsto - \lambda \, , \qquad
    \lambda \mapsto \frac{1-\lambda}{1+3 \lambda}    \label{lambda_sym}
\ee
generate the permutation group of the three roots.

\begin{figure} 
\begin{center}
\includegraphics[scale=.4]{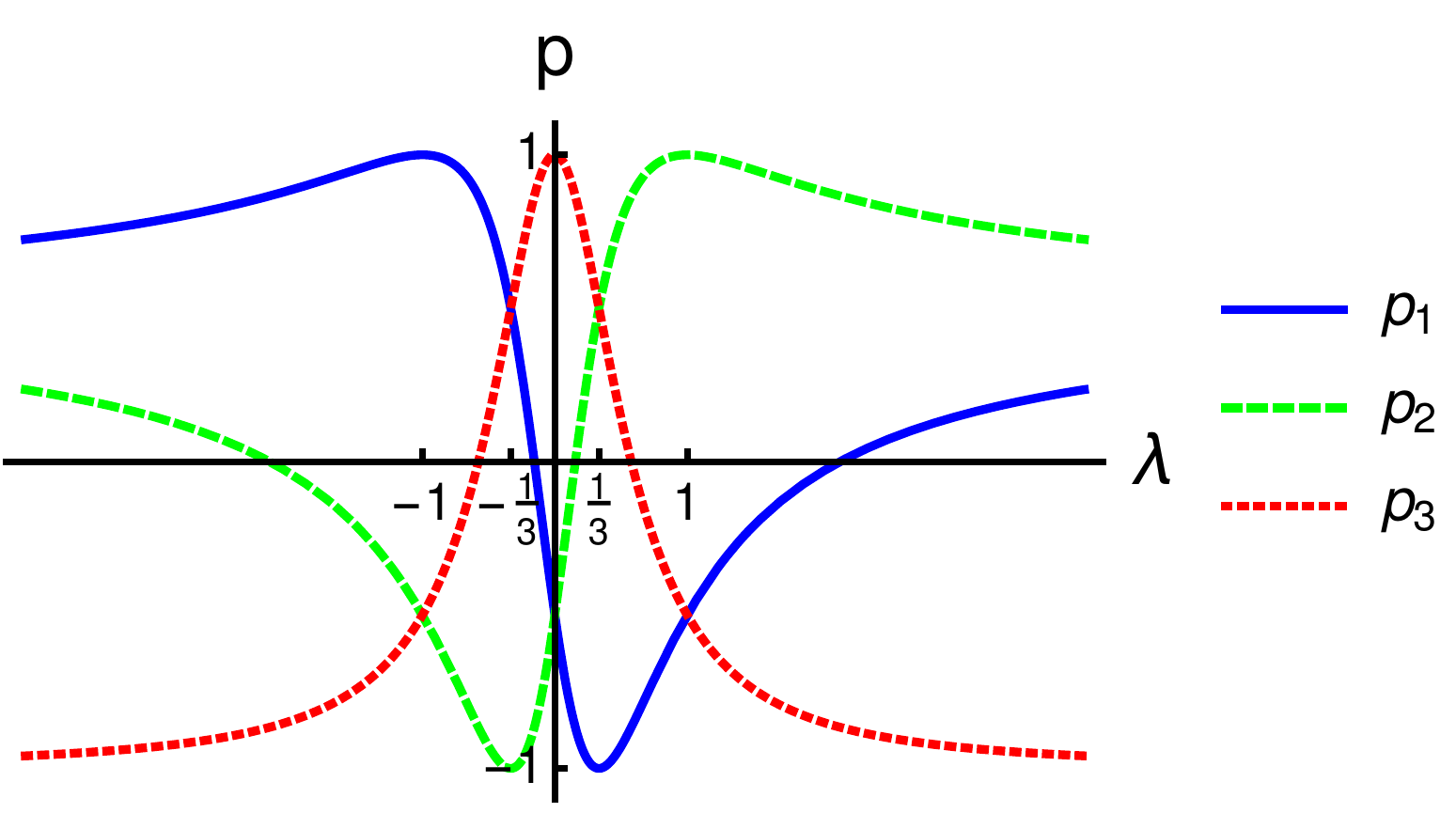} \hspace{.5cm}
\includegraphics[scale=.4]{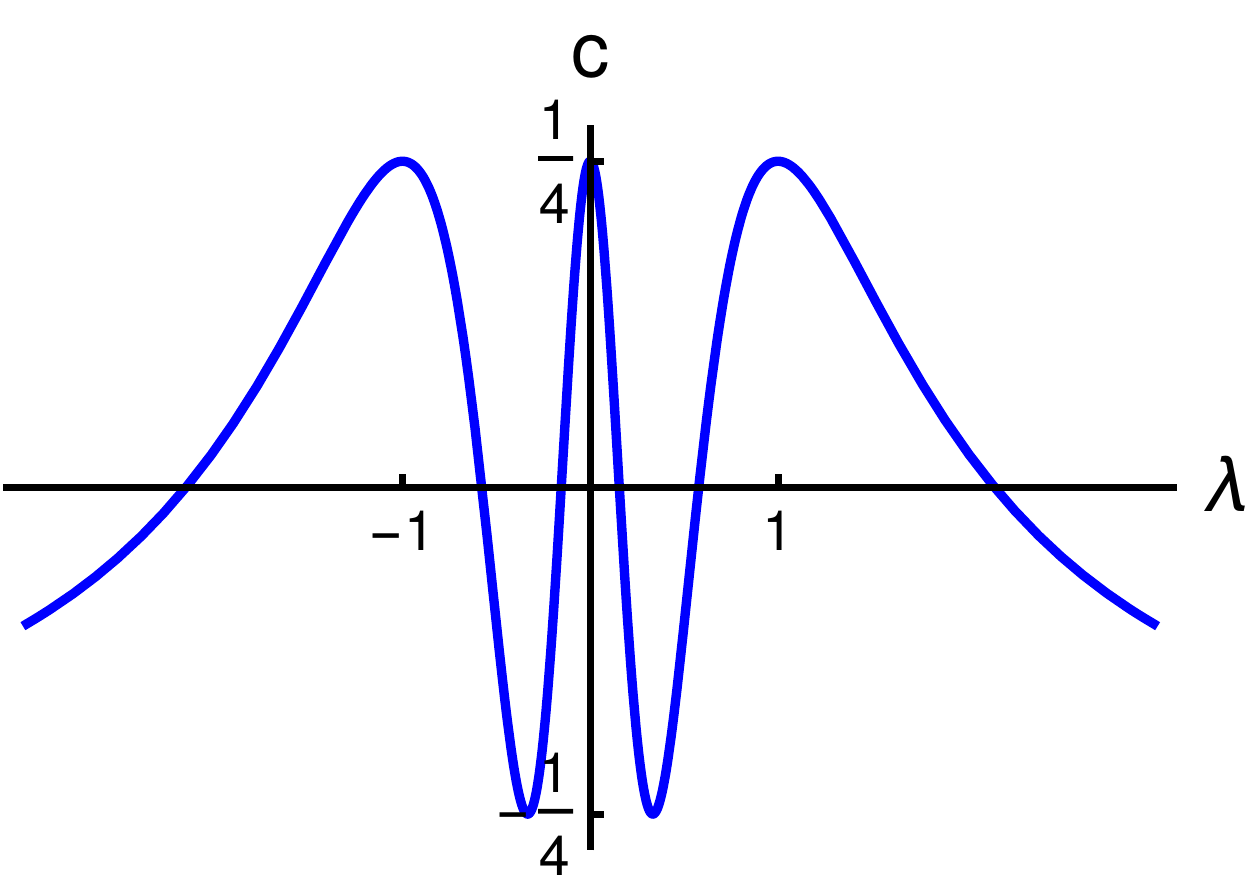}
\end{center}
\caption{The roots $p_a$ (the index $i$ in (\ref{roots}) is suppressed), $a=1,2,3$, of the cubic 
equation as functions of $\lambda$. The right plot shows $c$ as a function of $\lambda$, 
according to (\ref{c(lambda)}).
We chose $\beta = 1/4$, as also in the following figures. 
\label{fig:roots} }
\end{figure}

\subsection{Pure solitons} 
\label{subsec:pure_solitons}
In this subsection we select a subclass of soliton solutions, which we call ``pure solitons''. 
The main characterization is to restrict the expressions for the solutions of the linear systems 
in (\ref{theta}) and (\ref{chi}) to only involve a single root of each of the cubic equations 
(\ref{P_sol_cubic_eq}) and (\ref{Q_sol_cubic_eq}). 
The latter equations coincide, since we assume $C'=-C$ in this section, so we have to choose two different roots, 
$P_1 = P$ and $P_2 = Q$, of the same cubic equation. We will further assume that these matrices are diagonal, hence
\bez
  && P = \mathrm{diag}(p_1,\ldots,p_N) =: \mathrm{diag}(p_{1,1},\ldots,p_{N,1})  \, , \\
  && Q = \mathrm{diag}(q_1,\ldots,q_N) =: \mathrm{diag}(p_{1,2},\ldots,p_{N,2}) \, .
\eez
It will be convenient to allow both notations for the (diagonal) entries. 
The constants $p_i,q_i$ have to solve the cubic equation $z^3 = 3 \beta z + c_i$. We will require $q_i \neq p_i$, $i=1,\ldots,N$. Moreover, we will mostly 
also assume $p_i \neq p_j$ and $q_i \neq q_j$ for $i \neq j$. Writing
\bez
  \chi_1 = \left( \begin{array}{c}
            \eta_1 \\ \vdots  \\ \eta_N 
                   \end{array}
            \right) \, , \qquad
  \theta_1 = \left( \begin{array}{ccc}
            \xi_1 & \cdots  & \xi_N 
                     \end{array}
             \right) (Q-P) \, ,    
\eez
we then have
\bez
    \Omega_{ij} = \delta_{ij} + w_{ij} \, e^{\vartheta(p_j)-\vartheta(q_i)} \, , \qquad
     w_{ij} = \frac{q_j-p_j}{q_i-p_j} \, \eta_i K \xi_j  \, .
\eez
This is exactly the expression for $\Omega$ that we found in \cite{DMH18} for the KP$_K$ equation.
The only difference is that, for $i=1,\ldots,N$, $q_i \neq p_i$ now have to satisfy the cubic equations, so that 
$p_i^3 - 3 \beta  p_i = q_i^3 - 3 \beta  q_i$ holds, which is
\bez
      p_i^2 + p_i q_i + q_i^2 = 3 \beta \, .
\eez
But, apart from this, formulae derived in \cite{DMH18} for the potential KP$_K$ equation also apply to 
the case under consideration. Next we summarize those that are needed in this work. 
Introducing
\be
    \tau = e^{\vartheta(q_1) + \cdots + \vartheta(q_N)} \, \det \Omega \, , \qquad
    F = - e^{\vartheta(q_1) + \cdots + \vartheta(q_N)} \, \theta \, e^{\vartheta(P)} 
    \mathrm{adj}(\Omega) e^{-\vartheta(Q)} \, \chi \, ,   \label{pure_tau_F}
\ee
where $\mathrm{adj}(\Omega)$ is the adjugate of the matrix $\Omega$, 
we find that they have expansions
\bez
    \tau = \sum_{I \in \{1,2\}^N} \tau_I \, , \qquad
    \tau_I := \mu_I \, e^{\vartheta_I} \, , \qquad
    F = \sum_{I \in \{1,2\}^N} M_I \, e^{\vartheta_I} \, ,
\eez
where $\mu_I$ are constants, $M_I$ constant matrices, and 
\bez
    \vartheta_I = \sum_{k=1}^N \vartheta(p_{k \,a_k}) \qquad I=(a_1,\ldots,a_N) \, , \quad a_k \in \{1,2\} \, . 
\eez
Recall that $p_{k,1} = p_k$ and $p_{k,2} = q_k$, $k=1,\ldots,N$. We have
\bez
     \phi = \frac{F}{\tau }  \, .
\eez
Since $\tau_{I,x} = p_I \tau_I$, where $p_I = p_{1,a_1} + \cdots + p_{N,a_N}$ if $I=(a_1,\ldots,a_N)$, 
we obtain
\bez
     u = \frac{1}{\tau^2} \sum_{I,J \in \{1,2\}^N} (p_I  - p_J) (\mu_J M_I - \mu_I M_J) \, e^{\vartheta_I + \vartheta_j} \, .
\eez
Note that (cf. \cite{DMH18})
\bez
     \mathrm{tr} (K M_I) = \Big( p_I - \sum_{i=1}^N q_i \Big) \, \mu_I  \, ,
\eez
so that
\bez
     \mathrm{tr} (K u) = 2 (\log \tau)_{xx} \, ,
\eez
in accordance with (\ref{trKu}). 

If $\tau \neq 0$ and if $\mu_I>0$ for all $I$ with $\mu_I \neq 0$ in the expression for $\tau$, regularity of 
the solution is guaranteed. Let
\bez
    \mathcal{U}_I = \left\{ (x,t)\in \mathbb{R}^2 \, | \, \log \tau_I \geq \log \tau_J, \, J \in \{1,2\}^N \right\}
    \, .
\eez
We call this the region where $\vartheta_I$ dominates. As intersection of half-spaces, it is convex. 

If $\mu_I >0$, the tropical limit of $\phi$ in $\mathcal{U}_I$ is\footnote{Appendix~D in \cite{DMH11KPT} 
explains the relation with the tropical limit defined via the Maslov dequantization formula.} 
\bez
        \phi_I := \frac{M_I}{\mu_I} \, .
\eez  
For $I \neq J$, the intersection $\mathcal{U}_I \cap \mathcal{U}_J$ is a segment 
of the straight line determined by $\log \tau_I = \log \tau_J$. On such a (visible) segment the 
value of $u$ is given by
\bez
    u_{IJ} = \frac{1}{2} (p_I -p_J) (\phi_I - \phi_J) \, .
\eez
We find $\mathrm{tr}(K u_{IJ}) = \frac{1}{2} (p_I-p_J)^2$ and introduce normalized values
\bez
    \hat{u}_{IJ} = \frac{\phi_I-\phi_J}{p_I-p_J} \, ,
\eez
which satisfy $\mathrm{tr}(K \hat{u}_{IJ}) = 1$.
Using the notation
\bez
    I_k(a) = \left( a_1,\ldots,a_{k-1},a,a_{k+1},\ldots,a_N \right) \, ,
\eez
the $k$-th soliton appears in space-time on segments of the straight lines determined by 
$\log \tau_{I_k(1)} = \log \tau_{I_k(2)}$, i.e.,
\be
   x + (p_k+q_k) \, t + \frac{1}{p_k-q_k}  \log \frac{\mu_{I_k(1)}}{\mu_{I_k(2)}} = 0 \, .
       \label{k-soliton_line}
\ee
This also determines the asymptotic structure of a tropical limit graph of a pure $N$-soliton 
solution. Without restriction of generality, we can order the parameters such that 
$p_1 + q_1 < p_2 + q_2 < \cdots < p_N + q_N$.\footnote{With the chosen parametrization, 
$p_i+q_i = p_j+q_j$, for some $i,j$, implies either $p_i=p_j$ and $q_i=q_j$, or $p_i = q_j$ and $p_j = q_i$. 
We exclude these cases. }
If we represent 
$p_i$ and $q_i$ by the first two roots in (\ref{roots}), then $p_i+q_i$ is a strictly 
increasing function of $|\lambda|$, hence this order is obtained by choosing 
$0 < \lambda_1 < \lambda_2 < \cdots < \lambda_N$. 
Now it follows from (\ref{k-soliton_line}) that, for $t \ll 0$, the solitons appear along the 
$x$-axis according to their numbering, and for $t \gg 0$ they appear in reverse order. 
To the left of the center is the dominating phase region $\mathcal{U}_{1,\ldots,1}$.
Then follows counterclockwise $\mathcal{U}_{2,1,\ldots,1}$, $\mathcal{U}_{2,2,1,\ldots,1}$, ..., 
unless we get to the region $\mathcal{U}_{2,\ldots,2}$ on the right hand side. Correspondingly, 
starting again from $\mathcal{U}_{1,\ldots,1}$, we get clockwise to the regions 
$\mathcal{U}_{1,\ldots,2}$, $\mathcal{U}_{1,\ldots,2,2}$, ..., until we arrive at 
$\mathcal{U}_{2,\ldots,2}$. These regions always appear in a tropical limit graph of a pure 
$N$-soliton solution. The remaining $\mathcal{U}_I$ can only appear as \emph{bounded} 
regions. But some may be empty. This depends on the value of higher Boussinesq hierarchy variables, 
also see Fig.~\ref{fig:3-soliton} below.

\begin{remark}
It can happen that $\mu_I =0$ for some multi-index $I$, so that $e^{\vartheta_I}$ is absent 
in the expression for $\tau$, but that this exponential appears in the numerator of the 
expression for $u$, i.e., $M_I \neq 0$. Then some components of $u$ will exhibit exponential growth 
in the region where this phase dominates, and a tropical limit does not exist. 
\end{remark}

\subsubsection{Single soliton solution}
For the 1-soliton solution we find
\bez
    \tau = e^{p x + p^2 t+\varphi_0} + e^{q x+q^2 t-\varphi_0} \, , \qquad
    \varphi_0 = \frac{1}{2} \log (\eta K \xi) \, ,
\eez
assuming $\eta K \xi >0$, and 
\bez
    \phi = (p-q) 
     \frac{e^{p x+p^2 t+\varphi_0}}{e^{p x+p^2 t+\varphi_0}+e^{q x+q^2 t-\varphi _0}}
     \frac{\xi \otimes \eta}{\eta  K \xi} \, .
\eez
This yields
\bez
   u = \frac{1}{2} (p-q)^2 
    \mathrm{sech}^2 \left(\frac{1}{2} (p-q) (x+(p+q) t)+\varphi_0\right) 
    \frac{\xi \otimes \eta}{\eta  K \xi}  \, .
\eez
Setting 
\bez
    p = -\sqrt{\beta} \frac{ 1 + 6 \lambda - 3 \lambda^2}{1 + 3 \lambda^2} \, , \qquad
    q = - \sqrt{\beta} \frac{ 1 - 6 \lambda -3 \lambda^2}{1 + 3 \lambda^2} \, ,
\eez
it takes the form
\bez
  u = 4 \beta  \frac{18 \lambda^2}{(1 + 3 \lambda^2)^2} 
    \mathrm{sech}^2 \left( 2 \sqrt{\beta} \frac{ 3 \lambda}{1+3 \lambda^2} 
     \left(x - 2 \sqrt{\beta} \frac{ 1-3 \lambda^2}{1+3 \lambda^2} t \right) + \varphi_0 \right) 
      \frac{\xi \otimes \eta}{\eta  K \xi} \, .
\eez
Via the symmetries (\ref{lambda_sym}), 1-soliton solutions with other choices of the roots 
are obtained from the above solution. 
If $\lambda^2 < 1/3$, the soliton moves from left to right. If $\lambda^2 > 1/3$, it
moves from right to left. For $\lambda^2 = 1/3$, it is stationary.
In all cases, the absolute value of the velocity is less than $2 \sqrt{\beta}$.
We also note that $0 \leq \mathrm{tr}(K u) \leq 6 \beta$. 

The tropical limit graph of the 1-soliton solution is the boundary between the two dominating 
phase regions $\mathcal{U}_1$ and $\mathcal{U}_2$. It is the straight line in space-time 
($xt$-plane), determined by 
\bez 
   x + (p+q) t + \frac{1}{p-q} \log (\eta K \xi) = 0 \, ,
\eez
with slope $-1/(p+q)$.
We have
\bez
    u_{1,2} = \frac{1}{2} (p-q)^2 \, \frac{\xi \otimes \eta}{\eta  K \xi} \, , \qquad
    \hat{u}_{1,2} = \frac{\xi \otimes \eta }{\eta  K \xi} \, .
\eez
   
\subsubsection{2-soliton solution}
\label{subsec:pure_2-soliton}
In this case ($N=2$), we find
\bez
    \tau = \alpha \, e^{\vartheta(p_1) + \vartheta(p_2)} 
     + \kappa_{1,1} \, e^{\vartheta(p_1) + \vartheta(q_2)}
     + \kappa_{2,2} \, e^{\vartheta(p_2) + \vartheta(q_1)}
     + e^{\vartheta(q_1) + \vartheta(q_2)} \, ,
\eez
where
\be
 &&    \kappa_{ij} = \eta_i \, K \, \xi_j \, ,   \label{kappa_ij} \\
 && \alpha =  \kappa_{1,1} \kappa_{2,2} - \frac{(q_1-p_1) (q_2-p_2)}{(q_1-p_2) (q_2-p_1)}  \, \kappa_{1,2} \kappa_{2,1} \, ,
    \nonumber
\ee
and
\bez
   F &=& (q_1-p_1) (q_2-p_2) \Big( \frac{\kappa_{2,2}}{p_2-q_2} \xi_1 \otimes \eta_1
       + \frac{\kappa_{1,1}}{p_1-q_1} \xi_2 \otimes \eta_2 + \frac{\kappa_{1,2}}{q_1-p_2} \xi_1 \otimes \eta_2 \\
    && + \frac{\kappa_{2,1}}{q_2-p_1} \xi_2\otimes \eta_1 \Big) e^{\vartheta(p_1) + \vartheta(p_2)} 
       + (p_1-q_1) \xi_1 \otimes \eta_1 \, e^{\vartheta(p_1) + \vartheta(q_2)}
       + (p_2-q_2) \xi_2 \otimes \eta_2 \, e^{\vartheta(p_2) + \vartheta(q_1)} \, .
\eez
Hence, if $\alpha, \kappa_{1,1}, \kappa_{2,2} \neq 0$,
\bez
 && \hspace*{-.5cm} \phi_{1,1} = \frac{(q_1-p_1) (q_2-p_2)}{\alpha} \Big(
         \frac{ \kappa_{2,2}}{p_2-q_2} \, \xi_1 \otimes \eta_1 
       + \frac{\kappa_{1,2}}{q_1-p_2} \, \xi_1 \otimes \eta_2 
       + \frac{\kappa_{2,1}}{q_2-p_1} \, \xi_2 \otimes \eta_1 
       + \frac{\kappa_{1,1}}{p_1-q_1} \, \xi_2 \otimes \eta_2 \Big) \, , \\
 && \hspace*{-.5cm} \phi_{1,2} = \frac{p_1-q_1}{\kappa_{1,1}} \, \xi_1 \otimes \eta_1 \, , \quad
  \phi_{2,1} = \frac{p_2-q_2}{\kappa_{2,2}} \, \xi_2 \otimes \eta_2 \, , \quad 
  \phi_{2,2} = 0 \, .
\eez
Choosing
\be
    p_i = - \sqrt{\beta} \frac{1 + 6 \lambda _i - 3 \lambda_i^2}{1 + 3 \lambda_i^2} \, , \qquad
    q_i = - \sqrt{\beta} \frac{1 - 6 \lambda _i - 3 \lambda_i^2}{1 + 3 \lambda_i^2} \, ,
          \label{Bouss_p,q__2-soliton}
\ee
we obtain via $\phi = F/\tau$ a 2-soliton solution of the potential Bsq$_K$ 
equation.\footnote{If $\alpha$, $\kappa_{1,1}$ 
or $\kappa_{2,2}$ vanishes, the tropical limit may still be defined. But since the corresponding 
phase is absent in $\tau$, there is then no value $\phi_{1,1}$, $\phi_{1,2}$, respectively $\phi_{2,1}$.}
Fig.~\ref{fig:2-soliton} shows examples of corresponding tropical limit graphs. Applying the 
symmetries (\ref{lambda_sym}), 2-soliton solutions with other choices of the roots are 
obtained from the above solution.

\begin{figure} 
\begin{center}
\begin{minipage}{0.04\linewidth}
\vspace*{.15cm}
\bez
 \begin{array}{cc} t & \\ \uparrow & \\ & \rightarrow x \end{array}                
\eez
\end{minipage}
\includegraphics[scale=.2]{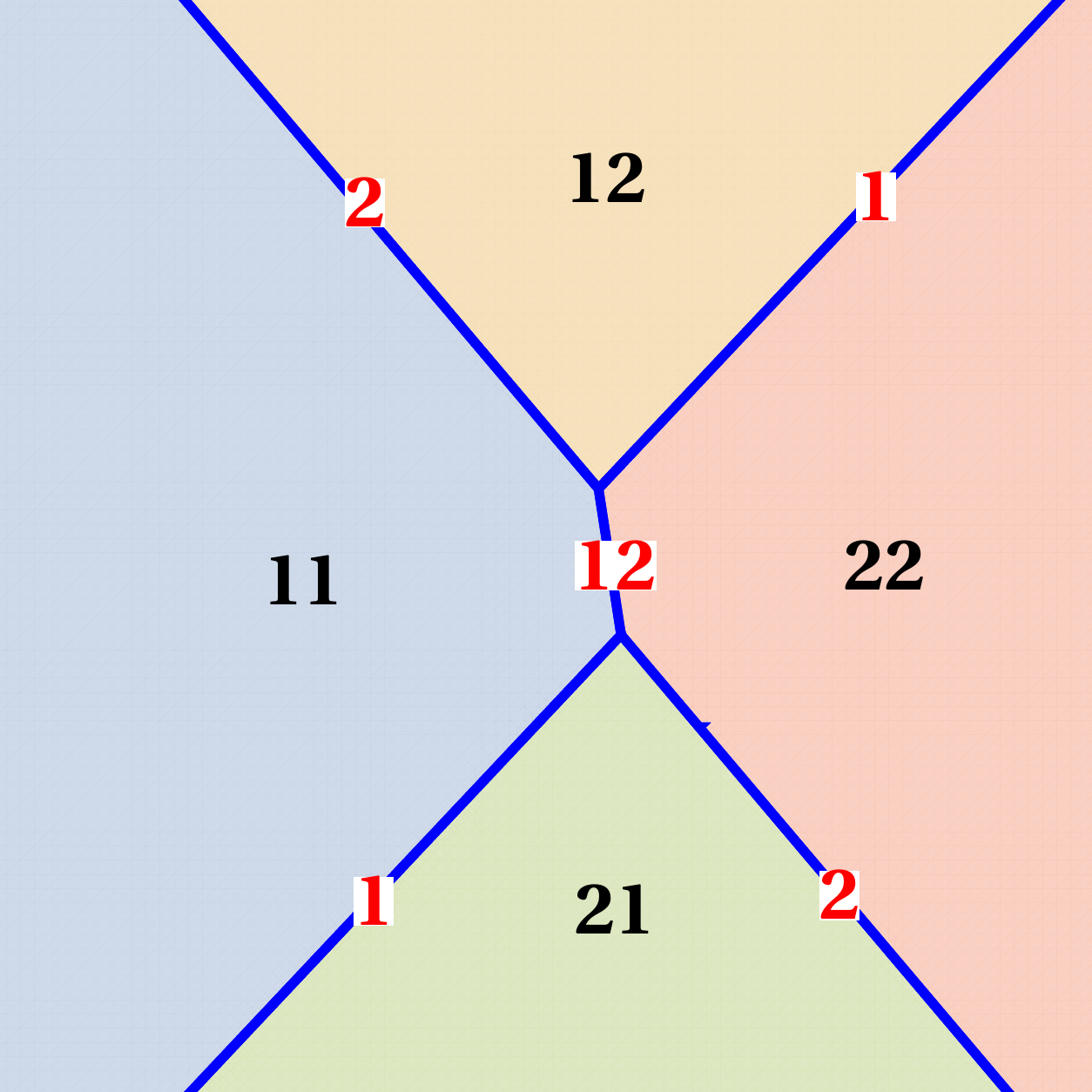} 
\hspace{.3cm}
\includegraphics[scale=.2]{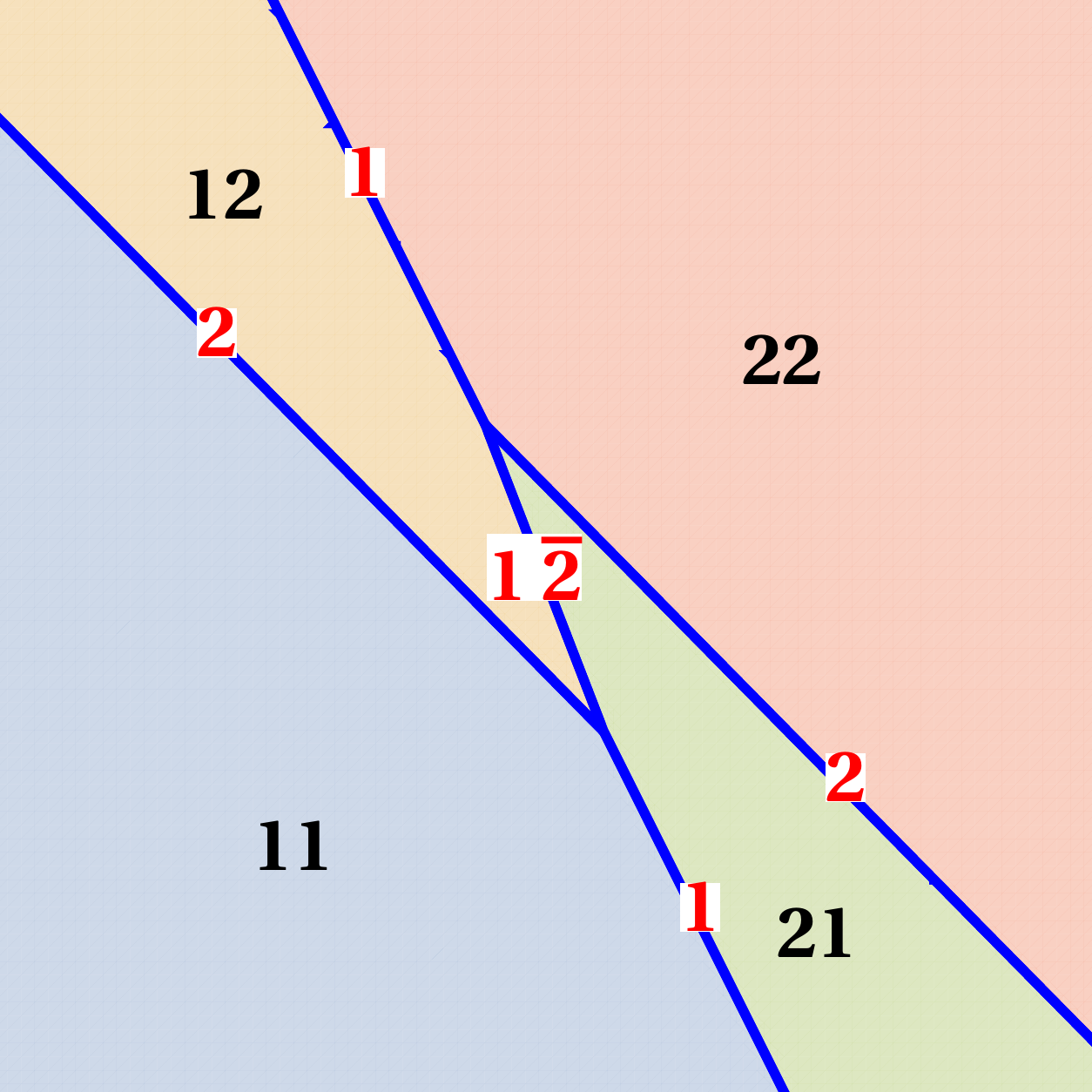} 
\hspace{.3cm}
\includegraphics[scale=.2]{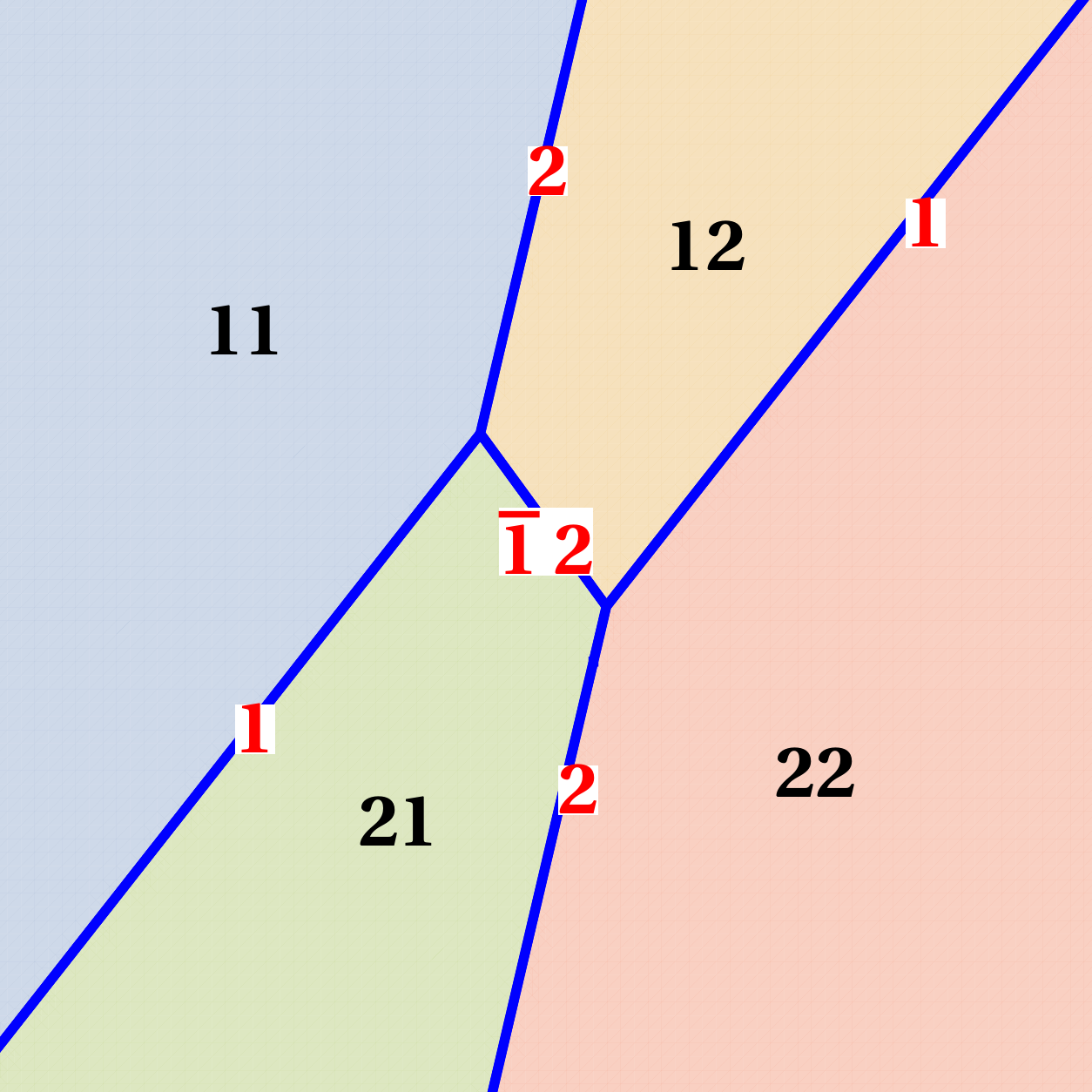} 
\end{center}
\caption{Tropical limit graphs of 2-soliton solutions of the $m=2$ vector (i.e., $n=1$) Bsq$_K$ equation 
in space-time ($xt$-plane). Here we chose $K=(1,1)$, $\eta_1 = \eta_2 =1$, 
$\xi_1 = (1,0)^T$, $\xi_2 = (0,1)^T$, and the parameter values 
$(\lambda_1,\lambda_2) = (1/10,2), (7/10,4),(1/5,5/11)$.  
This covers all possible directions of motion and interactions of two solitons. 
In the first case, the two solitons, numbered by 1 and 2, are moving towards each 
other (head-on collision), then form an intermediate ``virtual'' soliton 
(cf. \cite{Hiro+Ito83,Lamb+Muse89}), here denoted 
as a ``composite'' 12, which finally splits. This has no counterpart in the KdV case. 
In the other cases, the numbering of the virtual soliton draws on 
a formal analogy with particles and anti-particles (the latter indicated by a bar over 
the respective number). 
\label{fig:2-soliton} }
\end{figure}

Furthermore, we find
\bez
  \hat{u}_{11,12} &=& \frac{1}{\alpha} \Big( 
    \frac{\kappa_{1,2} \, \kappa_{2,1}}{\kappa_{1,1}} \frac{(p_1-q_1)^2}{(p_1-q_2)(p_2-q_1)} \, \xi_1 \otimes \eta_1
  - \kappa_{1,2} \, \frac{p_1-q_1}{p_2-q_1} \, \xi_1 \otimes \eta_2  \\
 && - \kappa_{2,1} \, \frac{(p_1-q_1)(p_2-q_1)}{(p_1-q_2)(p_2-q_1)} \, \xi_2 \otimes \eta_1 
  + \kappa_{1,1} \, \xi_2 \otimes \eta_2 \Big) \, , \\
  \hat{u}_{11,21} &=& \frac{1}{\alpha} \Big( \kappa_{2,2} \, \xi_1 \otimes \eta_1 
   - \kappa_{1,2} \, \frac{p_2-q_2}{p_2-q_1} \, \xi_1 \otimes \eta_2 
   - \kappa_{2,1} \, \frac{(p_2-q_1)(p_2-q_2)}{(p_1-q_2)(p_2-q_1)} \, \xi_2 \otimes \eta_1  \\
 && + \frac{\kappa_{1,2} \, \kappa_{2,1}}{\kappa_{2,2}} 
         \frac{(p_2-q_2)^2}{(p_1-q_2)(p_2-q_1)} \, \xi_2 \otimes \eta_2 \Big) \, , \\
  \hat{u}_{12,22} &=& \frac{1}{\kappa_{1,1}} \, \xi_1 \otimes \eta_1 \, , \qquad
   \hat{u}_{21,22} = \frac{1}{\kappa_{2,2}} \, \xi_2 \otimes \eta_2  \, .
\eez
The map $\mathcal{R} : \, (\hat{u}_{11,21},\hat{u}_{21,22}) \mapsto (\hat{u}_{12,22},\hat{u}_{11,12})$ is a 
nonlinear Yang-Baxter map, it satisfies the (quantum) Yang-Baxter equation 
\be
  \mathcal{R}_{12} \circ \mathcal{R}_{13} \circ \mathcal{R}_{23}
  = \mathcal{R}_{23} \circ \mathcal{R}_{13} \circ \mathcal{R}_{12} \, .  \label{YB}
\ee  
This map is also obtained from that of the KP theory, see Section~5 in \cite{DMH18}, with the 
parameters restricted by (\ref{Bouss_p,q__2-soliton}). 
The fact that $\mathcal{R}$ satisfies the Yang-Baxter equation follows from the Yang-Baxter property 
of the more general KP$_K$ Yang-Baxter map. Alternatively, it follows from the 3-soliton solution 
and the fact that the polarizations do not depend on the independent variables, including higher 
hierarchy variables, see Section~\ref{subsec:3solitons}.

In the vector case, we obtain a \emph{linear} map,  
\bez
     (\hat{u}_{12,22},\hat{u}_{11,12}) = (\hat{u}_{11,21},\hat{u}_{21,22}) \, R(\lambda_1, \lambda_2) \, ,
\eez
where
\bez
    R(\lambda_i,\lambda_j) = \left(\begin{array}{cc}
 \frac{\lambda_i-\lambda_j}{\lambda_i+\lambda_j} 
 \frac{1-\lambda_i-\lambda_j -3 \lambda_i \lambda_j}{1 -\lambda_i+\lambda_j + 3 \lambda_i \lambda_j} & 
 \frac{2 \lambda_i}{\lambda_i+\lambda_j} \frac{1 + 3 \lambda_j^2}{1 -\lambda_i+\lambda_j + 3 \lambda_i \lambda_j} \\[.5em]
 \frac{2 \lambda_j}{\lambda_i+\lambda_j} \frac{1 + 3 \lambda_i^2}{1 -\lambda_i+\lambda_j + 3 \lambda_i \lambda_j} & 
 \frac{\lambda_j-\lambda_i}{\lambda_i+\lambda_j} \frac{1+\lambda_i+\lambda_j-3 \lambda_i \lambda_j}{1-\lambda_i+\lambda_j + 3 \lambda_i \lambda_j}
\end{array}\right) \, .
\eez

\begin{remark}
Dropping the exponential factor in (\ref{pure_tau_F}), which has only been introduced to achieve a 
convenient numbering of phases, in the $N=2$ case we obtain 
\bez
   \tilde{\tau} 
 = 1 + e^{\zeta_1} + e^{\zeta_2} + \left( \frac{\kappa_{1,1} \kappa_{2,2} - \kappa_{1,2} \kappa_{2,1}}
        {\kappa_{1,1} \kappa_{2,2}}  
   + \frac{\kappa_{1,2} \kappa_{2,1}}{\kappa_{1,1} \kappa_{2,2}} \, 
       \frac{(p_1-p_2) (q_1-q_2)}{(p_1-q_2) (q_1-p_2)} \right) \, e^{\zeta_1+\zeta_2} \, ,
\eez
where $\zeta_i = \vartheta(p_i) - \vartheta(q_i) + \log \kappa_{ii}$, assuming $\kappa_{ii} >0$. 
Comparison with a known expression for the $\tau$-function of the 2-soliton solution of the scalar 
Boussinesq (or KP) equation shows that this determines a solution of the scalar 
Boussinesq equation if $\kappa_{1,1} \kappa_{2,2} = \kappa_{1,2} \kappa_{2,1}$. We also note that, 
if $\kappa_{1,2} \kappa_{2,1} = 0$, the above expression factorizes to 
$\tilde{\tau} = (1 + e^{\zeta_1}) (1 + e^{\zeta_2})$. In this case, the tropical limit graph 
is simply the superimposition of those of the factors\footnote{This is evident from 
the Maslov dequantization formula, cf. Appendix~D in \cite{DMH11KPT}, for example.}, hence there 
is no phase shift. 
\end{remark}

\subsubsection{Degenerations of the pure 2-soliton solution of the vector Boussinesq equation}
We consider the special cases where $p_1 = p_2$ or $q_1 = q_2$. Then we have $c_1 = c_2$, so that all 
parameters are roots of a single cubic equation. 
Since we represent $p_i$ and $q_i$ by the 
first two roots in (\ref{roots}), this means that $\lambda_2 = (1-\lambda_1)/(3 \lambda_1+1)$, 
respectively $\lambda_2 = (1+\lambda_1)/(3 \lambda_1-1)$. In both cases we have
$\alpha = \kappa_{1,1} \kappa_{2,2} - \kappa_{1,2} \kappa_{2,1}$, which vanishes when we address the vector 
Boussinesq equation. If $q_2 = q_1$, we obtain
\bez
  \tau = \Big( e^{\vartheta(q_1)} + \kappa_{1,1} \, e^{\vartheta(p_1)} 
          + \kappa_{2,2} \, e^{\vartheta(p_2)} \Big) \, e^{\vartheta(q_1)} \, .
\eez
The factor $e^{\vartheta(q_1)}$ does not influence the tropical limit graph, which is shown in 
Fig.~\ref{fig:pure_2-soliton_degen}. 
If $p_2 = p_1$, we find
\bez
 \tau = \Big( e^{-\vartheta(p_1)} + \kappa_{1,1} \,  e^{-\vartheta(q_1)}
 + \kappa_{2,2} \, e^{-\vartheta(q_2)} \Big) \, e^{\vartheta(p_1) + \vartheta(q_1) + \vartheta(q_2)} \, . 
\eez
The corresponding tropical limit graphs are Y-shaped (a soliton splits into two), respectively reverse Y-shaped
(two solitons merge), see Fig.~\ref{fig:pure_2-soliton_degen}. Approaching such a solution by 
letting $q_2 \to q_1$, respectively $p_2 \to p_1$, in the 2-soliton solution in Section~\ref{subsec:pure_2-soliton},
we see that the edge representing the virtual soliton (cf. the second and third graph in Fig.~\ref{fig:2-soliton})
gets longer and longer, in such a way that the dominating phase region $\mathcal{U}_{1,1}$ finally disappears at infinity. 
\begin{figure} 
\begin{center}
\begin{minipage}{0.04\linewidth}
\vspace*{.15cm}
\bez
 \begin{array}{cc} t & \\ \uparrow & \\ & \rightarrow x \end{array}                
\eez
\end{minipage}
\includegraphics[scale=.2]{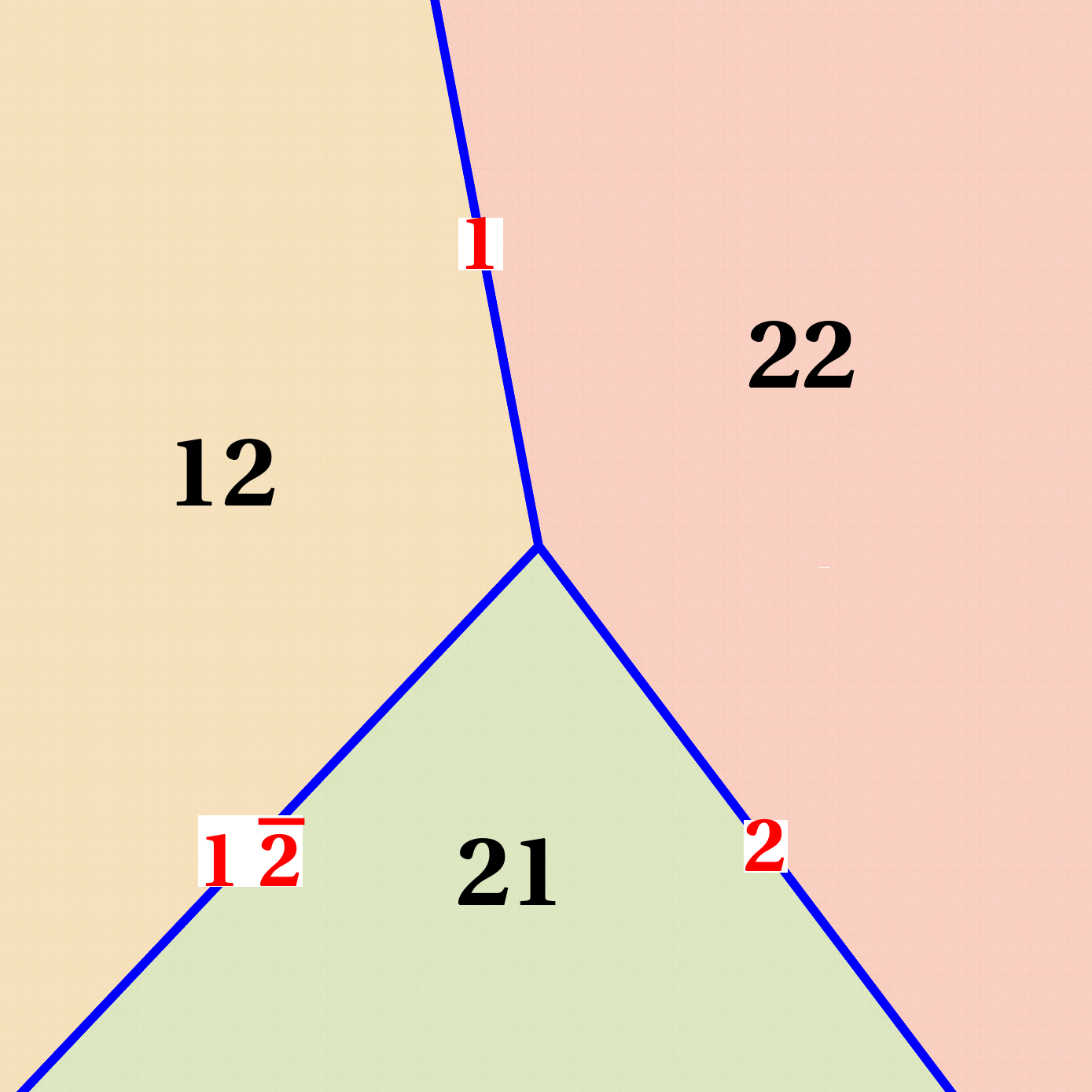} 
\hspace{.5cm}
\includegraphics[scale=.2]{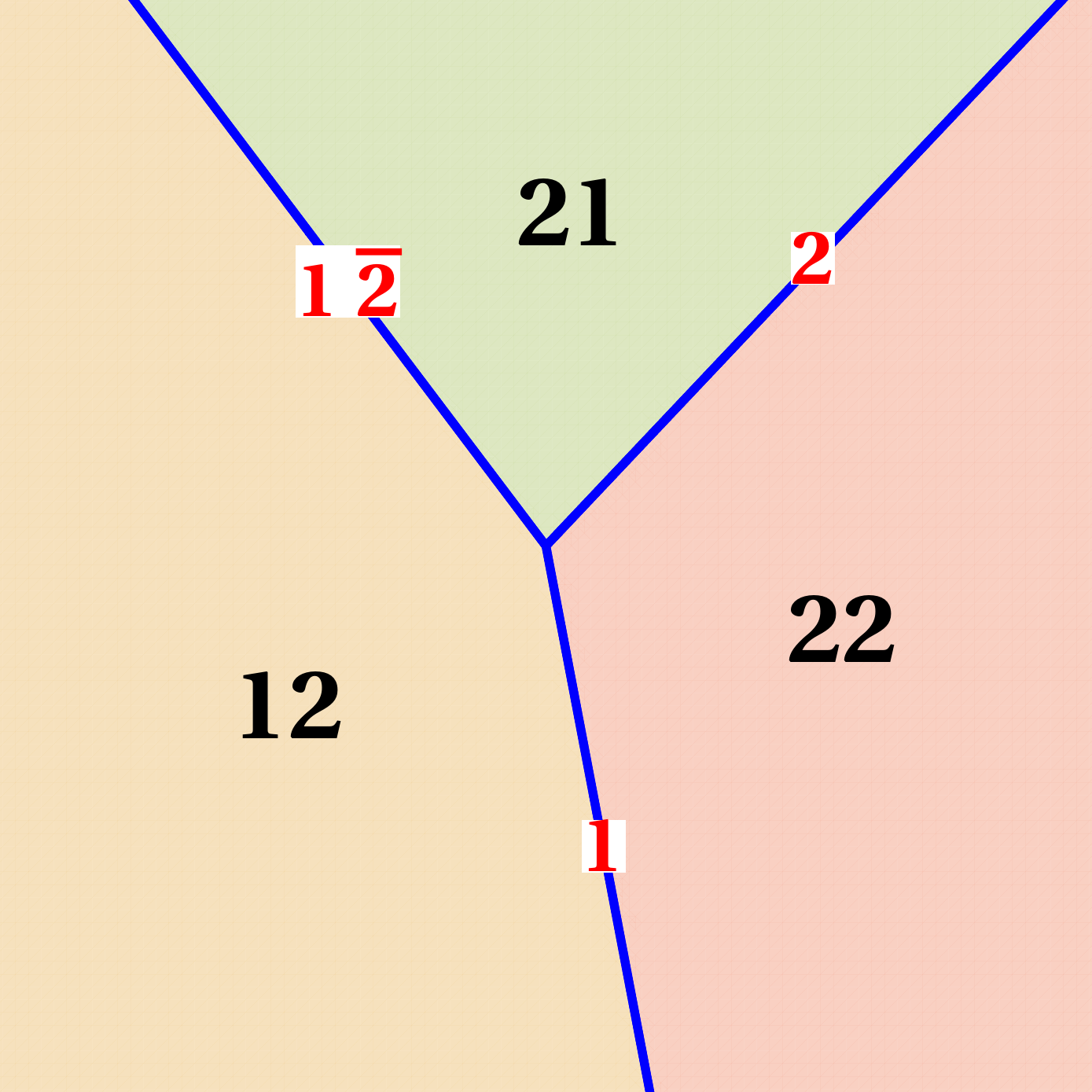}
\end{center}
\caption{Tropical limit graphs of degenerate 2-soliton solutions of the $m=2$ vector (i.e., $n=1$) 
Bsq$_K$ equation. Here we chose $K = (1,1)$, $\eta_1 = \eta_2 =1$, 
$\xi_1 = (1,0)^T$, $\xi_2 = (0,1)^T$, and the parameter values 
$(\lambda_1,\lambda_2) = (7/10,17/11)$ (so that $q_2 = q_1$), respectively $(\lambda_1,\lambda_2) = (7/10,3/31)$
(so that $p_2=p_1$).  
\label{fig:pure_2-soliton_degen} }
\end{figure}

In these cases the $R$-matrix reduces to
\bez
    R(\lambda,\frac{1-\lambda}{3 \lambda +1}) = \left( \begin{array}{cc} 
      0 & \frac{4 \lambda}{(1-\lambda)(3 \lambda+1)} \\[.5em]
      1 & \frac{(1+\lambda)(3 \lambda-1)}{(\lambda-1)(3 \lambda+1)} 
      \end{array} \right) \, ,
\eez
respectively
\bez
    R(\lambda,\frac{1+\lambda}{3 \lambda-1}) = \left( \begin{array}{cc} 
      \frac{(1-\lambda)(1 + 3 \lambda)}{4 \lambda} & 1 \\[.5em]
      \frac{(1+\lambda)(3 \lambda-1)}{4 \lambda} & 0 
      \end{array} \right) \, .
\eez

\subsubsection{$N$-soliton solutions and phase shifts}
If $I=(a_1,\ldots,a_N)$ and $i<j$, let
\bez
     I_{ij}(a,b) = (a_1,\ldots,a_{i-1},a,a_{i+1},\ldots,a_{j-1},b,a_{j+1},\ldots,a_N) \, .
\eez
The tropical limit graph of an $N$-soliton solution generically contains subgraphs describing 
2-soliton interactions, see Fig.~\ref{fig:2soliton_in_N}.
In the vicinity of such a local 2-soliton interaction, the solution is well approximated by only keeping 
the four relevant phases (since exponentials of the others are then negligible) in the function $\tau$. 
Hence  
\bez
    \tau \approx \tau^I_{ij} := \sum_{a,b=1,2} \tau_{I_{ij}(a,b)} \, , \qquad
    \phi \approx \phi^I_{ij} := \frac{1}{\tau^I_{ij}} \sum_{a,b=1,2} \phi_{I_{ij}(a,b)} \, \tau_{I_{ij}(a,b)} \, .
\eez
Also see, e.g., \cite{Koda10} (Section~5 therein) for such approximations in the case of KP solitons. 

\begin{figure} 
\begin{center}
\begin{minipage}{0.04\linewidth}
\vspace*{.15cm}
\bez
 \begin{array}{cc} t & \\ \uparrow & \\ & \rightarrow x \end{array}                
\eez
\end{minipage}
\includegraphics[scale=.2]{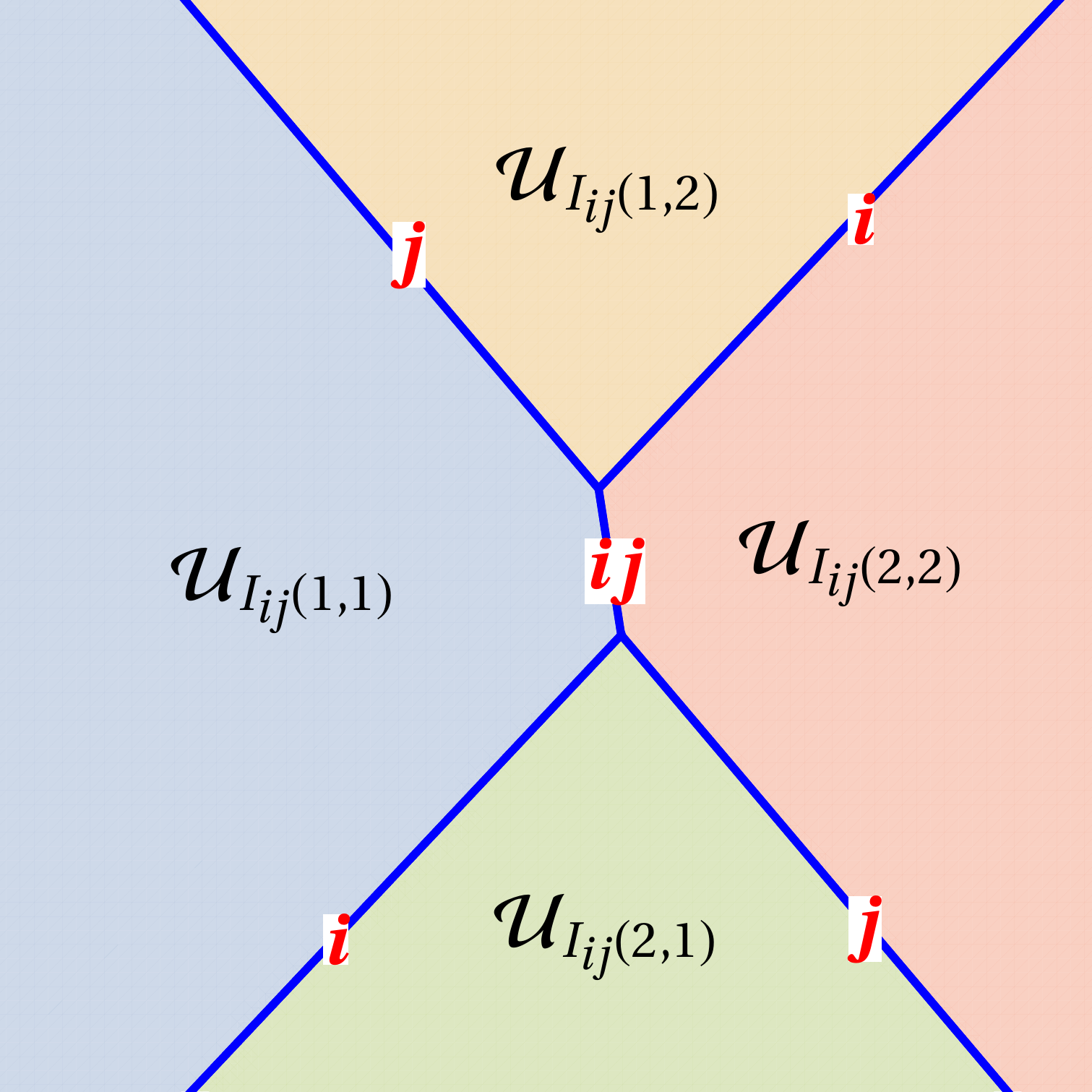} 
\end{center}
\caption{A 2-soliton part of an $N$-soliton tropical limit graph.
\label{fig:2soliton_in_N} }
\end{figure}

The two parallel line segments corresponding to the path of the $i$-th 
soliton are determined by
\bez
    x_{i,in} = -(p_i+q_i) \, t - (p_i-q_i)^{-1} \log\Big( \frac{\mu_{I_{ij}(1,1)}}{\mu_{I_{ij}(2,1)}} \Big) \, , \quad
    x_{i,out} = -(p_i+q_i) \, t - (p_i-q_i)^{-1} \log\Big( \frac{\mu_{I_{ij}(1,2)}}{\mu_{I_{ij}(2,2)}} \Big) .
\eez
Their shift along the $x$-axis, caused by the interaction with the $j$-th soliton, is
\bez
     \delta^I_{ij}x := x_{i,out} - x_{i,in} = (p_i-q_i)^{-1} \log( A_{I_{ij}} ) \, , \qquad
  A_{I_{ij}} = \frac{\mu_{I_{ij}(1,1)} \mu_{I_{ij}(2,2)}}{\mu_{I_{ij}(1,2)} \mu_{I_{ij}(2,1)}}      \, .
\eez
The parallel line segments of the $j$-th soliton are given by 
\bez
    x_{j,in} = -(p_j+q_j) \, t - (p_j-q_j)^{-1} \log\Big( \frac{\mu_{I_{ij}(2,1)}}{\mu_{I_{ij}(2,2)}} \Big) \, , \quad
    x_{j,out} = -(p_j+q_j) \, t - (p_j-q_j)^{-1} \log\Big( \frac{\mu_{I_{ij}(1,1)}}{\mu_{I_{ij}(1,2)}} \Big) .
\eez
Their shift along the $x$-axis, caused by the interaction with the $i$-th soliton, is 
\bez
    \delta^I_{ji}x := x_{j,out} - x_{j,in} = - (p_j-q_j)^{-1} \log( A_{I_{ij}} ) \, .
\eez
If $N=2$, we have $A_{I_{1,2}} = \alpha/(\kappa_{1,1} \kappa_{2,2})$ (also see \cite{Hiro+Ito83}
for the scalar case).

\subsubsection{3-soliton solutions}
\label{subsec:3solitons}
If $N=3$, we find
\bez
  \tau &=& \gamma \, e^{\vartheta_{1,1,1}} 
         + \alpha_{1,2} \, e^{\vartheta_{1,1,2}}
         + \alpha_{1,3} \, e^{\vartheta_{1,2,1}}
         + \alpha_{2,3} \, e^{\vartheta_{2,1,1}}  \\
     &&  + \kappa_{1,1} \, e^{\vartheta_{1,2,2}} 
         + \kappa_{2,2} \, e^{\vartheta_{2,1,2}} 
         + \kappa_{3,3} \, e^{\vartheta_{2,2,1}} 
         + e^{\vartheta_{2,2,2}} \, ,
\eez
where $\kappa_{ij}$, $i,j=1,2,3$, are defined as in (\ref{kappa_ij}), and  
\bez
     \alpha_{ij} &=& \kappa_{ii} \kappa_{jj} - \frac{(p_i-q_i)(p_j-q_j)}{(p_j-q_i)(p_i-q_j)}  \kappa_{ij} \kappa_{ji} \, , \\
     \gamma &=&  \kappa_{1,1} \kappa_{2,2} \kappa_{3,3} 
        + \frac{(p_1-q_1)(p_3-q_3)(p_2-q_2)}{(p_2-q_1)(p_3-q_2)(p_1-q_3)}  \kappa_{1,2} \kappa_{2,3} \kappa_{3,1} \\
     && + \frac{(p_1-q_1)(p_3-q_3)(p_2-q_2)}{(p_3-q_1)(p_1-q_2)(p_2-q_3)}  \kappa_{1,3} \kappa_{2,1} \kappa_{3,2} 
        - \frac{(p_1-q_1)(p_2-q_2)}{(p_2-q_1)(p_1-q_2)}  \kappa_{1,2} \kappa_{2,1} \kappa_{3,3}  \\
     && - \frac{(p_3-q_3)(p_2-q_2)}{(p_3-q_2)(p_2-q_3)} \kappa_{1,1} \kappa_{2,3} \kappa_{3,2}
        - \frac{(p_1-q_1)(p_3-q_3)}{(p_3-q_1)(p_1-q_3)}  \kappa_{1,3} \kappa_{2,2} \kappa_{3,1} \, .
\eez
Furthermore, if all coefficients of exponentials in $\tau$ are positive, we have the following tropical values of $\phi$,
\bez
 \phi_{1,1,1} &=& \frac{\kappa_{1,1} \kappa_{2,2} \kappa_{3,3}}{\gamma} \Big(
                 \tilde{\alpha}_{1,2} (p_3-q_3) \, \xi_3 \otimes \eta_3
               + \tilde{\alpha}_{1,3} (p_2-q_2) \, \xi_2 \otimes \eta_2
               + \tilde{\alpha}_{2,3} (p_1-q_1) \, \xi_1 \otimes \eta_1 \\
          &&   + (1-\tilde{\alpha}_{2,3}) \tilde{\alpha}_{1,2,3} (q_3-p_2) \frac{\xi_2\otimes \eta_3}{\kappa_{3,2}}
               + (1-\tilde{\alpha}_{2,3}) \tilde{\alpha}_{1,3,2} (q_2-p_3) \frac{\xi_3\otimes \eta_2}{\kappa_{2,3}} \\
          &&   + (1-\tilde{\alpha}_{1,3}) \tilde{\alpha}_{2,1,3} (q_3-p_1) \frac{\xi_1\otimes \eta_3}{\kappa_{3,1}}
               + (1-\tilde{\alpha}_{1,3}) \tilde{\alpha}_{2,3,1} (q_1-p_3) \frac{\xi_3\otimes \eta_1}{\kappa_{1,3}} \\
          &&   + (1-\tilde{\alpha}_{1,2}) \tilde{\alpha}_{3,1,2} (q_2-p_1) \frac{\xi_1\otimes \eta_2}{\kappa_{2,1}}
               + (1-\tilde{\alpha}_{1,2}) \tilde{\alpha}_{3,2,1} (q_1-p_2) \frac{\xi_2\otimes \eta_1}{\kappa_{1,2}} \Big) \, , \\
    \phi_{1,1,2} &=& \frac{1}{\tilde{\alpha}_{1,2}} \Big( 
                   (p_1-q_1) \frac{\xi_1\otimes \eta_1}{\kappa_{1,1}}
                 + (p_2-q_2) \frac{\xi_2\otimes \eta_2}{\kappa_{2,2}}
                 + (q_2-p_1)(1-\tilde{\alpha}_{1,2}) \frac{\xi_1\otimes \eta_2}{\kappa_{2,1}} \\
            &&   + (q_1-p_2)(1-\tilde{\alpha}_{1,2}) \frac{\xi_2\otimes \eta_1}{\kappa_{1,2}} \Big)   \, , \\
    \phi_{1,2,1} &=& \frac{1}{\tilde{\alpha}_{1,3}} \Big(
                   (p_1-q_1) \frac{\xi_1\otimes \eta_1}{\kappa_{1,1}}
                 + (p_3-q_3) \frac{\xi_3\otimes \eta_3}{\kappa_{3,3}}
                 + (q_3-p_1)(1-\tilde{\alpha}_{1,3}) \frac{\xi_1\otimes \eta_3}{\kappa_{3,1}} \\
            &&   + (q_1-p_3)((1-\tilde{\alpha}_{1,3}) \frac{\xi_3\otimes \eta_1}{\kappa_{1,3}} \Big)   \, , \\
    \phi_{2,1,1} &=& \frac{1}{\tilde{\alpha}_{2,3}} \Big(
                   (p_2-q_2) \frac{\xi_2\otimes \eta_2}{\kappa_{2,2}}
                 + (p_3-q_3) \frac{\xi_3\otimes \eta_3}{\kappa_{3,3}}
                 + (q_3-p_2)(1-\tilde{\alpha}_{2,3}) \frac{\xi_2 \otimes \eta_3}{\kappa_{3,2}} \\
            &&   + (q_2-p_3)(1-\tilde{\alpha}_{2,3}) \frac{\xi_3 \otimes \eta_2}{\kappa_{2,3}} \Big)   \, , \\
    \phi_{1,2,2} &=& (p_1-q_1) \frac{\xi_1\otimes \eta_1}{\kappa_{1,1}} \, , \qquad 
    \phi_{2,1,2} = (p_2-q_2) \frac{\xi_2\otimes \eta_2}{\kappa_{2,2}} \, , \\
    \phi_{2,2,1} &=& (p_3-q_3) \frac{\xi_3\otimes \eta_3}{\kappa_{3,3}} \, , \qquad
    \phi_{2,2,2} = 0 \, ,    
\eez
where
\bez
    \tilde{\alpha}_{ij} &=& 1 - \frac{(p_i-q_i)(p_j-q_j) \, \kappa_{ij} \kappa_{ji}}{(p_j-q_i)(p_i-q_j) \, \kappa_{ii} \kappa_{jj}} \, , \qquad
    \tilde{\alpha}_{kij} = 1 - \frac{(p_k-q_k) (p_j-q_i) \, \kappa_{ik} \kappa_{kj}}{(p_k-q_i) (p_j-q_k) \, \kappa_{kk} \kappa_{ij}} \, .
\eez

Examples of corresponding tropical limit graphs are shown in Fig.~\ref{fig:3-soliton}. Here we extended 
the phase expression (\ref{Bouss_phase}) by including the next hierarchy variable:
\bez
   \vartheta(P) = P \, x + P^2 \, t + P^4 \, s \, .
\eez
The first and the third graph correspond to a large negative, respectively large positive value 
of $s$. The sequences of 2-soliton interactions are according to the left, respectively right 
hand side of the Yang-Baxter equation. Since the polarizations along edges of a tropical 
limit graph do not depend on the variables ($x,t,s$), we conclude that the Yang-Baxter 
equation holds. 

\begin{figure} 
\begin{center}
\begin{minipage}{0.04\linewidth}
\vspace*{.15cm}
\bez
 \begin{array}{cc} t & \\ \uparrow & \\ & \rightarrow x \end{array}                
\eez
\end{minipage}
\includegraphics[scale=.25]{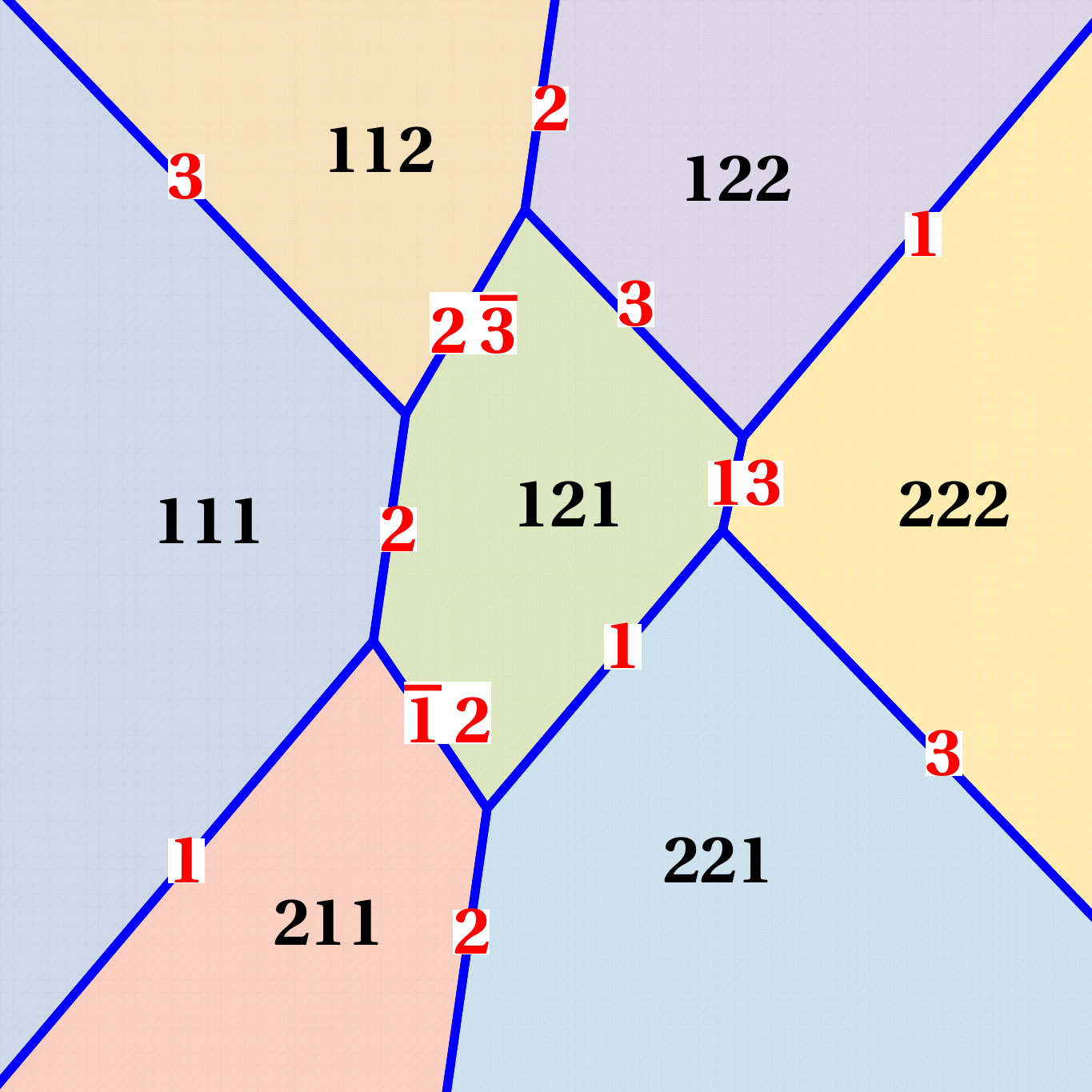} 
\hspace{.3cm}
\includegraphics[scale=.25]{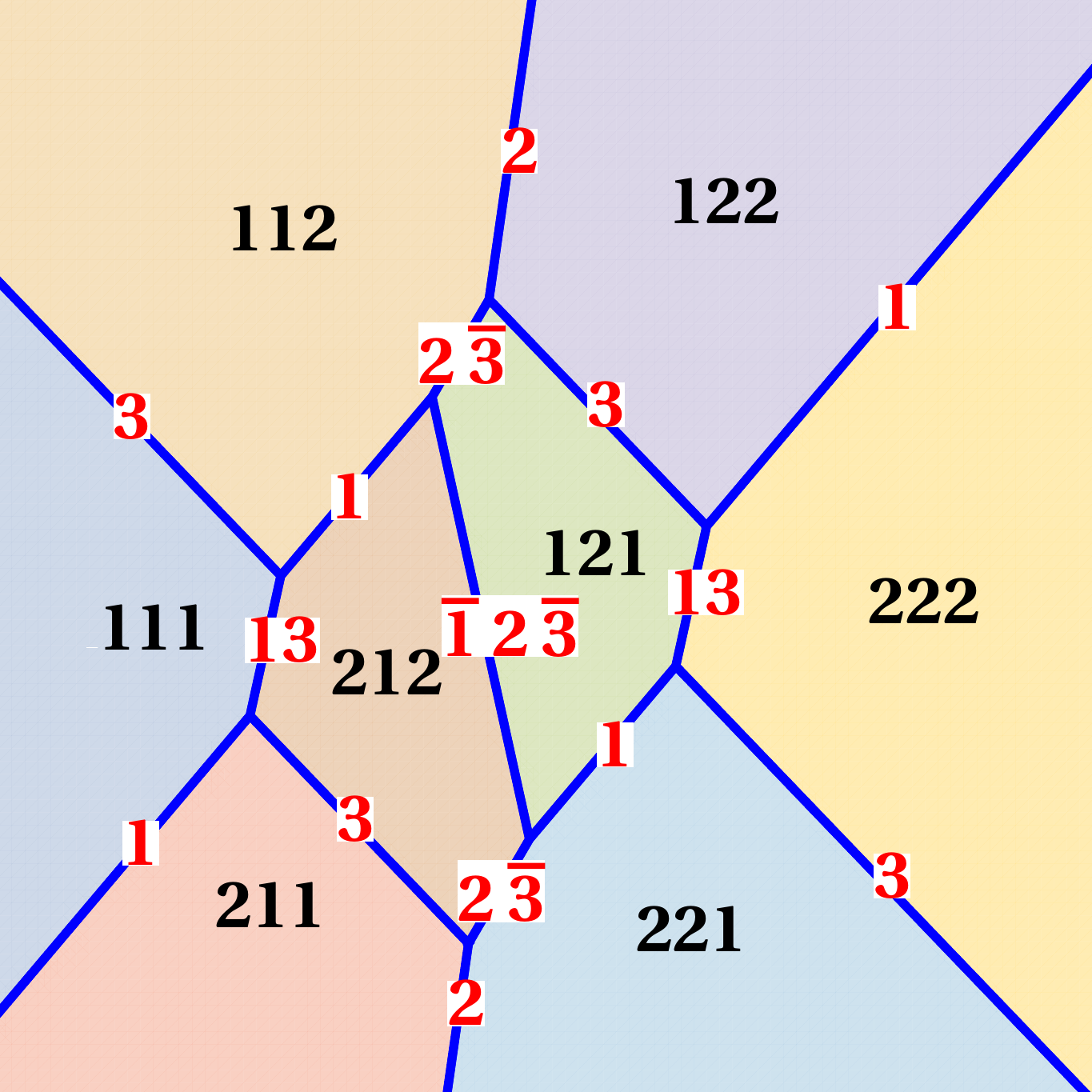} 
\hspace{.3cm}
\includegraphics[scale=.25]{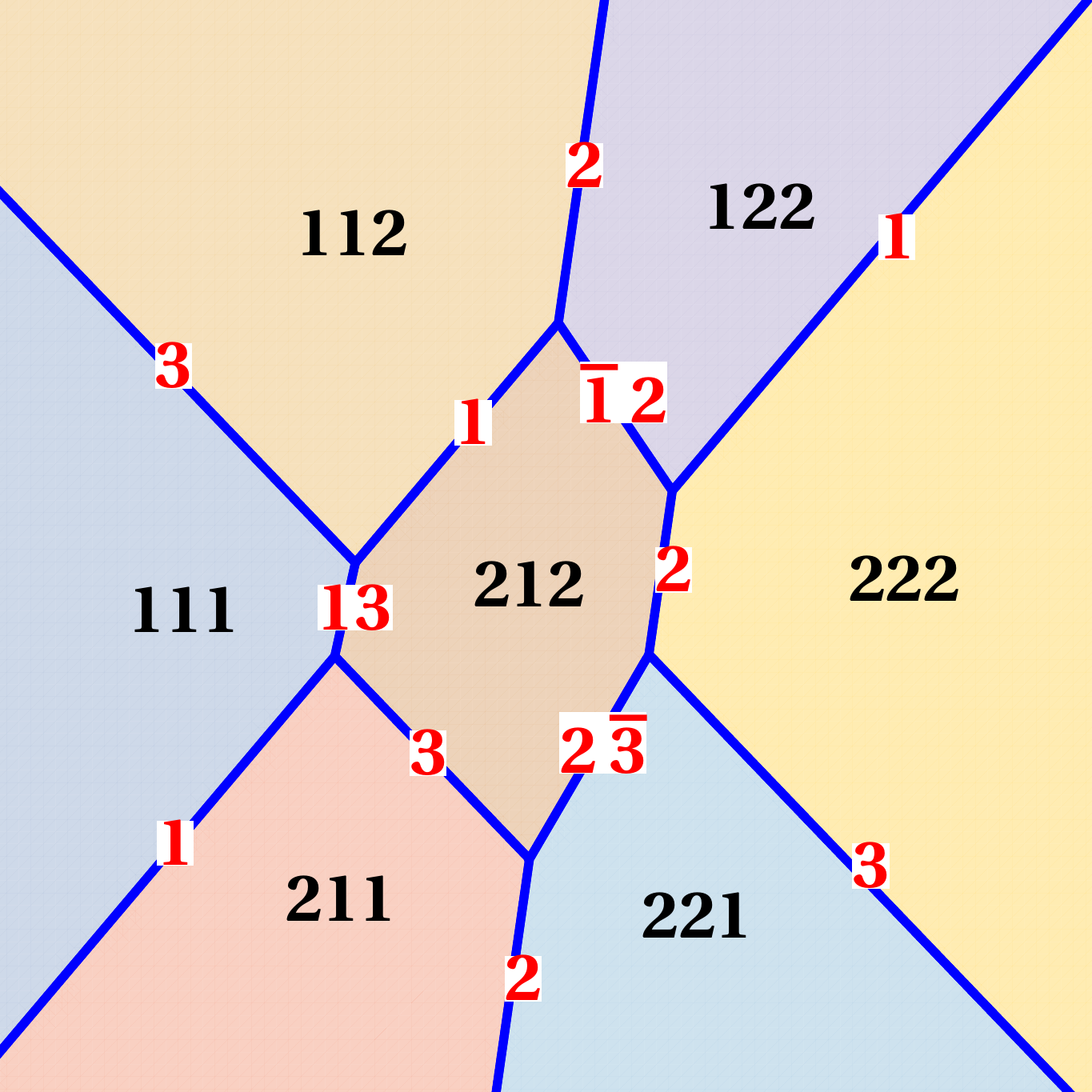} 
\end{center}
\caption{Tropical limit graphs of 3-soliton solutions of the $m=3$ vector ($n=1$) Bsq$_K$ 
equation with $K = (1,1,1)$, $\eta_i = 1$, $i=1,2,3$, 
$\xi_1 = (1,0,0)^T$, $\xi_2 = (0,1,0)^T$, $\xi_3 = (0,0,1)^T$, and $p_i$, $q_i$ are given by 
(\ref{Bouss_p,q__2-soliton}) with $(\lambda_1,\lambda_2,\lambda_3) = (1/6,1/2,4)$. 
The three graphs correspond to consecutive values of the next hierarchy variable, $s = -50,10,50$. 
According to the general asymptotic structure of pure soliton solutions, $\mathcal{U}_{1,2,1}$ 
and $\mathcal{U}_{2,1,2}$ are the only possible interior regions. 
\label{fig:3-soliton} }
\end{figure}

\subsection{Other soliton configurations}
Now we consider solutions involving three roots of the cubic equation. There are then two cases, either
\bez
    \theta = \theta_1 \, e^{\vartheta(P_1)} + \theta_2 \, e^{\vartheta(P_2)} \, , \qquad
    \chi = e^{-\vartheta(P_3)} \, \chi_3 \, , 
\eez
or
\bez
    \theta = \theta_3 \, e^{\vartheta(P_3)} \, , \qquad 
    \chi = e^{-\vartheta(P_1)} \, \chi_1 + e^{-\vartheta(P_2)} \, \chi_2 \, . 
\eez
The second, ``dual'' choice can be obtained from the first by applying the symmetry $\phi \mapsto \phi^T$, $t \mapsto -t$, 
and using $\chi_a \leftrightarrow \theta_a^T$, $P_a \mapsto -P_a^T$. 
In the following we use again the decomposition (\ref{chi,theta_decomp}), but with the rescaling 
$\xi_{ia} \mapsto (p_{i,3} - p_{ia}) \, \xi_{ia}$, $a=1,2$.

\begin{example}
\label{ex:2to1}
We consider the simplest case, $N=1$. Hence $i=1$ in $p_{ia}$ and $\xi_{ia}$. The corresponding 
index will now be suppressed. Then we have 
\bez
    \theta = (p_3 - p_1) \xi_1 \, e^{\vartheta(p_1)} + (p_3-p_2) \xi_2 \, e^{\vartheta(p_2)} \, , \qquad
    \chi = e^{-\vartheta(p_3)} \, \eta_3 \, .
\eez
Here $\xi_a$, $a=1,2$, are $m$-component column vectors, $\eta_3$ is an $n$-component row vector. Then
\bez
    \tau = e^{\vartheta(p_3)} \Omega 
         = e^{\vartheta(p_3)} + \eta_3 K \xi_1 \, e^{\vartheta(p_1)} + \eta_3 K \xi_2 \, e^{\vartheta(p_2)} 
\eez
and
\bez
    \phi = - \frac{1}{\tau} \Big( (p_3 - p_1) \xi_1 \otimes \eta_3 \, e^{\vartheta(p_1)} 
             + (p_3-p_2) \xi_2 \otimes \eta_3 \, e^{\vartheta(p_2)} \Big) \, . 
\eez
We set
\bez
    p_1 = - \frac{\sqrt{\beta } \left(1+6 \lambda-3 \lambda^2\right)}{1+3 \lambda^2} \, , \quad
    p_2 = - \frac{\sqrt{\beta } \left(1-6 \lambda-3 \lambda^2\right)}{1+3 \lambda^2} \, , \quad
    p_3 = \frac{2 \sqrt{\beta } \left(1-3 \lambda^2\right)}{1+3 \lambda^2} \, .
\eez
The solution describes the merging of two solitons into a single one, also see Fig.~\ref{fig:2to1}.  
If $\lambda \in \{ 0, \pm 1/3, \pm 1 \}$, two of the $p_i$ are equal and the solution reduces to a 1-soliton solution.
The dual case describes the splitting of a single soliton into two.
\begin{figure} 
\begin{center}
\includegraphics[scale=.3]{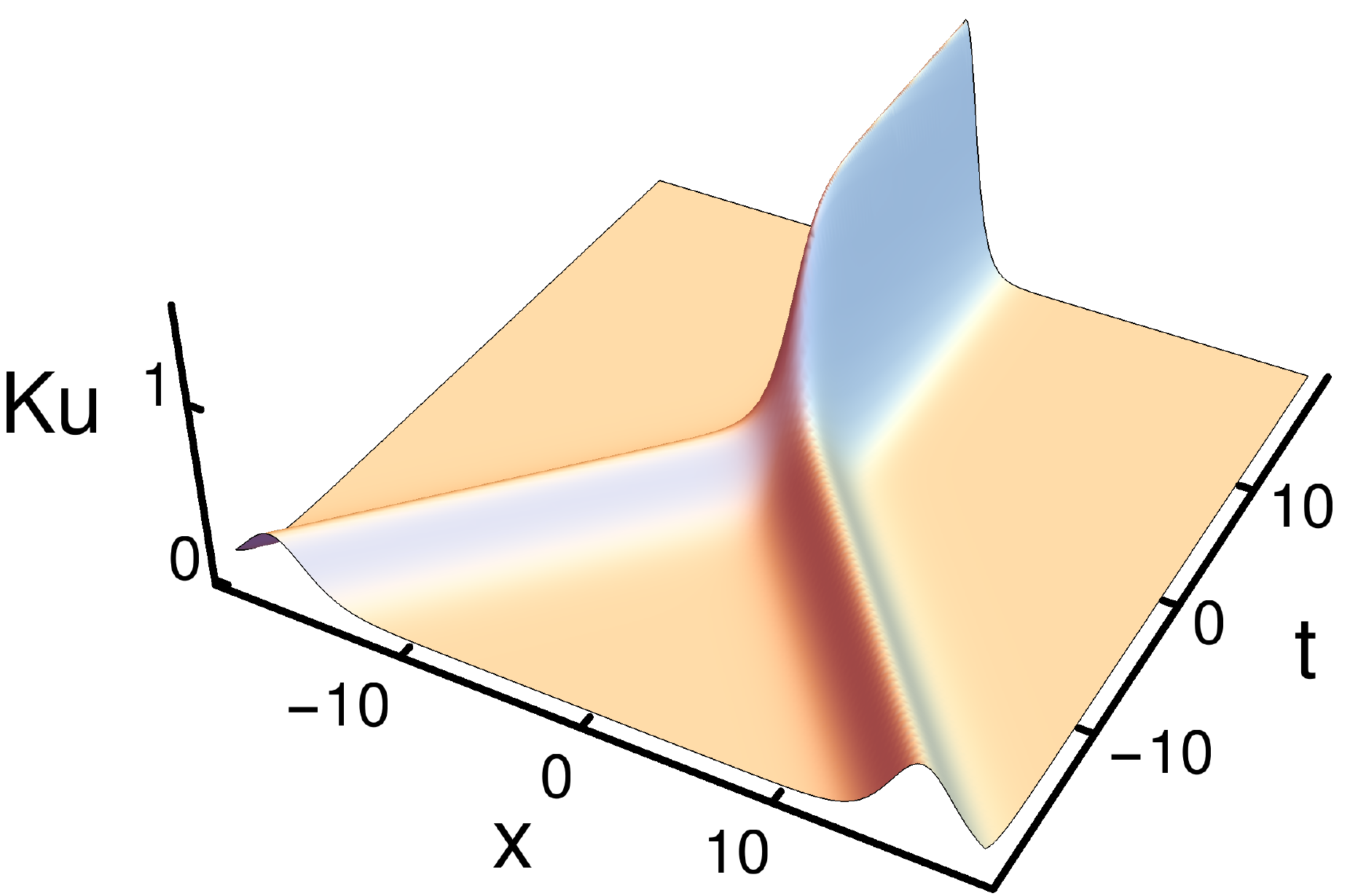} 
\hspace{.3cm}
\begin{minipage}{0.04\linewidth}
\vspace*{.15cm}
\bez
 \begin{array}{cc} t & \\ \uparrow & \\ & \rightarrow x \end{array}                
\eez
\end{minipage}
\includegraphics[scale=.18]{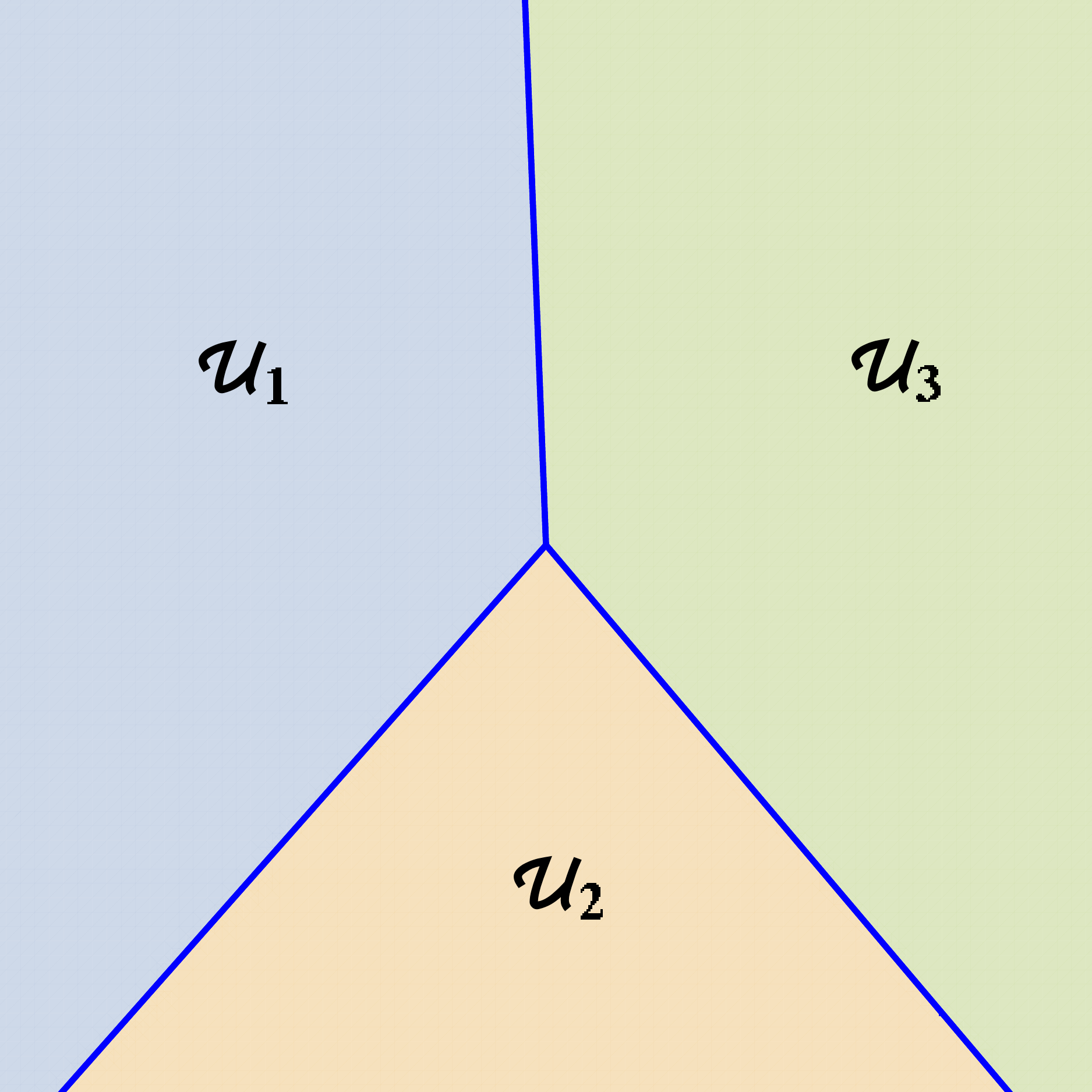} 
\end{center}
\caption{Merging of two $m=2$ vector ($n=1$) solitons into one, according to Example~\ref{ex:2to1}.  
Here we chose $K=(1,1)$, $\theta_1 = (1,0)^T$, $\theta_2 = (0,1)^T$, $\eta_3 =1$ and $\lambda =1/7$. 
\label{fig:2to1} }
\end{figure}
\end{example}

\begin{remark}
In contrast to the KdV reduction of KP, where a quadratic equation rules the game, Bsq$_K$ thus admits 
solutions with a tropical limit graph in space-time having the most elementary rooted binary tree shape: three 
edges meeting at a vertex. We may speculate 
that in higher Gelfand-Dickey reductions there are corresponding limits for the number of edges forming 
a rooted binary tree. 
\end{remark}

\begin{example}
\label{ex:superpos_trees}
For $N=2$ and $P_a = \mathrm{diag}(p_{1,a},p_{2,a})$, $a=1,2,3$, 
we obtain
\bez
     \phi = \frac{F}{\tau}
\eez
with 
\bez
   \tau &=& e^{\vartheta(p_{1,3})+\vartheta(p_{2,3})} \det(\Omega)  \\
        &=& \sum_{a,b=1}^2 \alpha_{ab} \, e^{\tilde{\vartheta}_{a b}} 
         + \kappa_{1,1,1} \, e^{\tilde{\vartheta}_{1,3}} 
         + \kappa_{1,1,2} \, e^{\tilde{\vartheta}_{2,3}} 
         + \kappa_{2,2,1} \, e^{\tilde{\vartheta}_{3,1}} 
         + \kappa_{2,2,2} \, e^{\tilde{\vartheta}_{3,2}} 
         + e^{\tilde{\vartheta}_{3,3}} 
\eez
and 
\bez
 F &\hspace{-.3cm}=& \hspace{-.3cm} (p_{1,3}-p_{1,1}) (p_{2,3}-p_{2,1}) \Big(
           \frac{\kappa_{1,1,1} \xi_{2,1} \otimes \eta_2}{p_{1,1}-p_{1,3}}
         + \frac{\kappa_{1,2,1} \xi_{1,1} \otimes \eta_2}{p_{1,3}-p_{2,1}}
         + \frac{\kappa_{2,1,1} \xi_{2,1} \otimes \eta_1}{p_{2,3}-p_{1,1}}
         + \frac{\kappa_{2,2,1} \xi_{1,1} \otimes \eta_1}{p_{2,1}-p_{2,3}} \Big) e^{\tilde{\vartheta}_{1,1}} \\
     &&\hspace{-.5cm} + (p_{1,3}-p_{1,2}) (p_{2,3}-p_{2,1}) \Big(
           \frac{\kappa_{1,1,2} \xi_{2,1} \otimes \eta_2}{p_{1,2}-p_{1,3}}
         + \frac{\kappa_{1,2,1} \xi_{1,2} \otimes \eta_2}{p_{1,3}-p_{2,1}}
         + \frac{\kappa_{2,1,2} \xi_{2,1} \otimes \eta_1}{p_{2,3}-p_{1,2}}
         + \frac{\kappa_{2,2,1} \xi_{1,2} \otimes \eta_1}{p_{2,1}-p_{2,3}} \Big) e^{\tilde{\vartheta}_{2,1}} \\
     &&\hspace{-.5cm} + (p_{1,3}-p_{1,1}) (p_{2,3}-p_{2,2}) \Big(
           \frac{\kappa_{1,1,1} \xi_{2,2} \otimes \eta_2}{p_{1,1}-p_{1,3}}
         + \frac{\kappa_{1,2,2} \xi_{1,1} \otimes \eta_2}{p_{1,3}-p_{2,2}}
         + \frac{\kappa_{2,1,1} \xi_{2,2} \otimes \eta_1}{p_{2,3}-p_{1,1}}
         + \frac{\kappa_{2,2,2} \xi_{1,1} \otimes \eta_1}{p_{2,2}-p_{2,3}} \Big) e^{\tilde{\vartheta}_{1,2}} \\
     &&\hspace{-.5cm} + (p_{1,3}-p_{1,2}) (p_{2,3}-p_{2,2}) \Big(
           \frac{\kappa_{1,1,2} \xi_{2,2} \otimes \eta_2}{p_{1,2}-p_{1,3}}
         + \frac{\kappa_{1,2,2} \xi_{1,2} \otimes \eta_2}{p_{1,3}-p_{2,2}}
         + \frac{\kappa_{2,1,2} \xi_{2,2} \otimes \eta_1}{p_{2,3}-p_{1,2}}
         + \frac{\kappa_{2,2,2} \xi_{1,2} \otimes \eta_1}{p_{2,2}-p_{2,3}} \Big) e^{\tilde{\vartheta}_{2,2}} \\
     &&\hspace{-.5cm} + (p_{1,1}-p_{1,3}) \, \xi_{1,1}\otimes \eta_1 \, e^{\tilde{\vartheta}_{1,3}}
       + (p_{1,2}-p_{1,3}) \, \xi_{1,2} \otimes \eta_1 \, e^{\tilde{\vartheta}_{2,3}}
       + (p_{2,1}-p_{2,3}) \, \xi_{2,1} \otimes \eta_2 \, e^{\tilde{\vartheta}_{3,1}}  \\
     &&\hspace{-.5cm} + (p_{2,2}-p_{2,3}) \, \xi_{2,2} \otimes \eta_2 \, e^{\tilde{\vartheta}_{3,2}}  \, .
\eez
Here we set
\bez
  \tilde{\vartheta}_{ab} &=& \vartheta(p_{1,a}) + \vartheta(p_{2,b}) \qquad \quad a,b = 1,2,3 \, , \\
  \kappa_{ija} &=& \eta_i \, K \, \xi_{j,a} \qquad \quad i,j =1,2 \, , \quad a = 1,2  \, , \\
  \alpha_{ab} &=& \kappa_{1,1,a} \kappa_{2,2,b} - \frac{(p_{1,a}-p_{1,3}) (p_{2,b}-p_{2,3})}
          {(p_{1,a}-p_{2,3})(p_{2,b}-p_{1,3})} \kappa_{2,1,a} \kappa_{1,2,b} \qquad a,b=1,2 \, , 
\eez
using (\ref{chi,theta_decomp}) with
\bez
  \chi_3 = \left( \begin{array}{c}
            \eta_1 \\ \eta_2 
                   \end{array} \right) .
\eez 
\end{example}

\begin{example}
\label{ex:trees_2x2}
According to Proposition~\ref{prop:vectorBoussN=2} below, in the scalar and vector case there is no regular 
solution in the class given by Example~\ref{ex:superpos_trees}, with all possible phases present 
in the expression for $\tau$. But corresponding regular solutions exist, for example, in the 
$2 \times 2$ matrix case.  
Fig.~\ref{fig:trees_2x2} shows two examples of tropical limit graphs. Here we chose 
$K= \mathrm{diag}(1,1)$ and $\lambda_1 = 7/10$, $\lambda_2=4$.
\begin{figure} 
\begin{center}
\hspace{.3cm}
\begin{minipage}{0.04\linewidth}
\vspace*{.15cm}
\bez
 \begin{array}{cc} t & \\ \uparrow & \\ & \rightarrow x \end{array}                
\eez
\end{minipage}
\includegraphics[scale=.226]{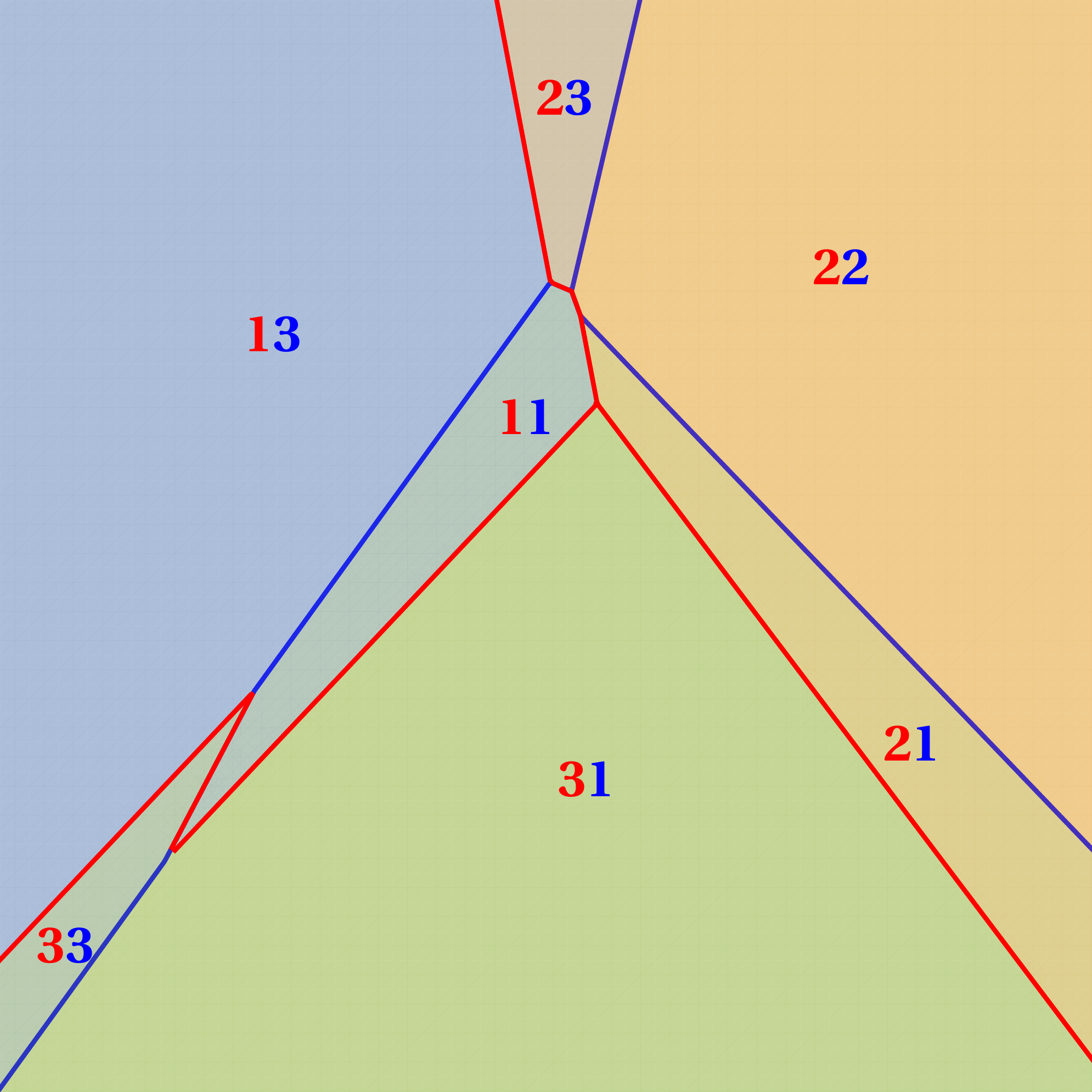} 
\hspace{.5cm} 
\includegraphics[scale=.2]{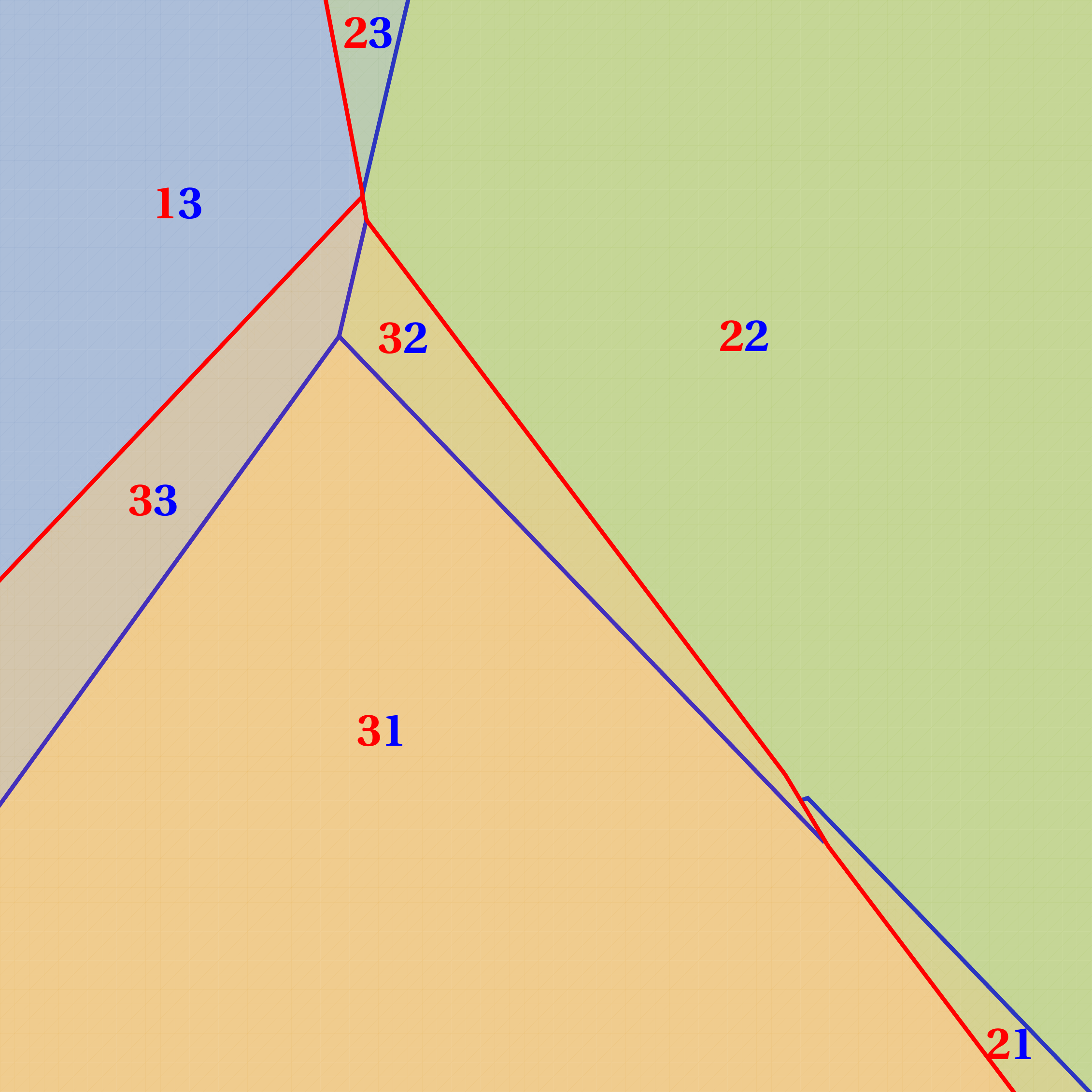} 
\end{center}
\caption{Tropical limit graphs of $2\times 2$ Boussinesq solutions  
according to Example~\ref{ex:trees_2x2}. 
In the first example, we set $\eta_1 = (10^2,0)$, $\eta_2 = (0,10^{-2})$,
$\xi_{1,1} = (1,1)^T$, $\xi_{1,2} = (1,0)^T$, $\xi_{2,1} = (0,1)^T$,  
$\xi_{2,2} = (1,2)^T$. In the second we chose $\eta_1 = (10^{-4},0)$, $\eta_2 = (0,10^4)$,
$\xi_{1,1} = (1,0)^T$, $\xi_{1,2} = (1,-2)^T$, $\xi_{2,1} = (2,1)^T$,  
$\xi_{2,2} = (1,1)^T$.
Only seven of the nine phases appearing in the expression for $\tau$ are visible. 
\label{fig:trees_2x2} }
\end{figure}
\end{example}

\subsubsection{Solutions of the vector Boussinesq equation}
\label{subsec:vBsq}
In this subsection we consider the case of the $m$-component vector ($n=1$) Boussinesq equation. This includes 
the scalar Boussinesq case ($m=1$). 

\paragraph{$\boldsymbol{N=2}$.} 

\begin{proposition}
\label{prop:vectorBoussN=2}
For the vector Boussinesq equation, and with real parameters, it is \emph{not} possible that all 
coefficients in the expression for $\tau$ in the $N=2$ case (Example~\ref{ex:superpos_trees}) 
are positive. 
\end{proposition}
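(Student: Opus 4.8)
The plan is to turn the statement into a purely root‑theoretic sign condition and then to kill that condition with a short computation in the parametrization of Section~\ref{subsec:param}. The starting point is the observation that in the vector case $n=1$ every $\eta_i$ and every $K\xi_{j,a}$ is a scalar, so all the $\kappa_{ija}=\eta_i K\xi_{j,a}$ commute; in particular $\kappa_{1,1,a}\kappa_{2,2,b}=\eta_1\eta_2\,(K\xi_{1,a})(K\xi_{2,b})=\kappa_{2,1,a}\kappa_{1,2,b}$ for $a,b\in\{1,2\}$. Substituting this into the formula for $\alpha_{ab}$ in Example~\ref{ex:superpos_trees} gives $\alpha_{ab}=\kappa_{1,1,a}\kappa_{2,2,b}\,g_{ab}$, where
\[
  g_{ab}=1-\frac{(p_{1,a}-p_{1,3})(p_{2,b}-p_{2,3})}{(p_{1,a}-p_{2,3})(p_{2,b}-p_{1,3})}
        =\frac{(p_{1,a}-p_{2,b})(p_{2,3}-p_{1,3})}{(p_{1,a}-p_{2,3})(p_{2,b}-p_{1,3})}
\]
depends only on the roots. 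Hence, if all nine coefficients of $\tau$ were positive, then $\kappa_{1,1,1},\kappa_{1,1,2},\kappa_{2,2,1},\kappa_{2,2,2}>0$, so $\kappa_{1,1,a}\kappa_{2,2,b}>0$ for all $a,b\in\{1,2\}$, and positivity of the four $\alpha_{ab}$ would force $g_{ab}>0$ for all $a,b\in\{1,2\}$. So it suffices to show that for no real $\lambda_1,\lambda_2$ are $g_{11},g_{12},g_{21},g_{22}$ all positive.

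To do this I would insert the roots (\ref{roots}); the factor $\sqrt\beta$ cancels in each $g_{ab}$, as do the common denominators $(1+3\lambda_1^2)(1+3\lambda_2^2)$. Setting $d=\lambda_1-\lambda_2$, $s=\lambda_1+\lambda_2$, $p=3\lambda_1\lambda_2$ and $P_1=1+d+p$, $P_2=1-d+p$, $Q_1=1+s-p$, $Q_2=1-s-p$, a direct factorization shows that all surviving squared factors cancel, leaving $\mathrm{sign}\,g_{11}=\mathrm{sign}(-sQ_2/(P_1P_2))$, $\mathrm{sign}\,g_{22}=\mathrm{sign}(sQ_1/(P_1P_2))$, $\mathrm{sign}\,g_{12}=\mathrm{sign}(-dP_2/(Q_1Q_2))$, $\mathrm{sign}\,g_{21}=\mathrm{sign}(dP_1/(Q_1Q_2))$. (One may assume $d,s\neq 0$ and $P_1,P_2,Q_1,Q_2\neq 0$: otherwise the two cubics coincide, whence $p_{2,3}=p_{1,3}$ and every $g_{ab}=0$, or a root of one cubic equals the distinguished root of the other, which is excluded in Example~\ref{ex:superpos_trees}.) Now suppose $g_{11},g_{12},g_{21},g_{22}$ are all positive. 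From $g_{12}g_{21}>0$ we get $P_1P_2<0$, and from $g_{11}g_{22}>0$ we get $Q_1Q_2<0$; feeding $Q_1Q_2<0$ back into $g_{12}>0$ and into $g_{21}>0$ gives $dP_2>0$ and $dP_1<0$, i.e.\ $\mathrm{sign}\,P_2=\mathrm{sign}\,d$ and $\mathrm{sign}\,P_1=-\mathrm{sign}\,d$, so $\mathrm{sign}(P_1-P_2)=-\mathrm{sign}\,d$. But $P_1-P_2=2d$, a contradiction, and the Proposition follows.

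The only non‑mechanical part of this is the factorization step, and that is where I expect the work to be: one must check that after inserting (\ref{roots}) each of the six relevant differences $p_{1,a}-p_{2,b}$, $p_{1,a}-p_{2,3}$, $p_{2,b}-p_{1,3}$, $p_{2,3}-p_{1,3}$ splits (up to the positive denominators) into linear pieces belonging to the short list $\{\,1+d+p,\ 1-d+p,\ 1+s-p,\ 1-s-p,\ d,\ s\,\}$, so that in the ratios defining the $g_{ab}$ nothing but the signs of $P_1,P_2,Q_1,Q_2,d,s$ remains. Once this bookkeeping is in place the contradiction $P_1-P_2=2d$ is immediate. It is worth noting that this is precisely where one uses that both cubics have the \emph{same} $\beta$: for roots of two unrelated depressed cubics all four $g_{ab}$ can in fact be made simultaneously positive, so the arithmetic of the shared cubic (equivalently, the relation $\sum_a p_{1,a}^2=\sum_a p_{2,a}^2$) is essential.
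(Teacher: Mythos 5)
Your proof is correct, and I have checked the factorizations it relies on. The first half coincides with the paper's: in the vector case $\kappa_{2,1,a}\kappa_{1,2,b}=\kappa_{1,1,a}\kappa_{2,2,b}$, so $\alpha_{ab}=\kappa_{1,1,a}\kappa_{2,2,b}\,\tilde{\alpha}_{ab}$ with your $g_{ab}$ equal to the paper's $\tilde{\alpha}_{ab}$, and everything reduces to showing that the four $\tilde{\alpha}_{ab}$ cannot all be positive. From there the routes diverge. The paper stays parametrization-free: it lists the three possible orderings of the roots $p_{i,1},p_{i,2},p_{i,3}$ relative to $\pm\sqrt{\beta}$ (read off from (\ref{roots})), obtains nine cases for the pair $(\lambda_1,\lambda_2)$, and asserts that each leads to a contradiction, leaving the inequality chasing to the reader. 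You instead substitute (\ref{roots}) directly and factor every relevant difference of roots over the positive denominator $(1+3\lambda_1^2)(1+3\lambda_2^2)$: with $d=\lambda_1-\lambda_2$, $s=\lambda_1+\lambda_2$, $p=3\lambda_1\lambda_2$ one indeed gets $p_{2,3}-p_{1,3}\propto ds$, $p_{1,1}-p_{2,1}\propto -dQ_2$, $p_{1,2}-p_{2,2}\propto dQ_1$, $p_{1,1}-p_{2,2}\propto -sP_2$, $p_{1,2}-p_{2,1}\propto sP_1$, $p_{1,1}-p_{2,3}\propto -P_1Q_1$, $p_{1,2}-p_{2,3}\propto -P_2Q_2$, $p_{2,1}-p_{1,3}\propto -P_2Q_1$, $p_{2,2}-p_{1,3}\propto -P_1Q_2$ (all with positive proportionality constants), which gives exactly your sign formulas for the $g_{ab}$ and the two-line contradiction via $P_1-P_2=2d$. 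Your degenerate cases ($d$, $s$, or one of $P_i,Q_j$ vanishing) are correctly dismissed, since they force either $\alpha_{ab}=0$ or a vanishing denominator in Example~\ref{ex:superpos_trees}. What your approach buys is a fully explicit, mechanically checkable argument in place of nine unwritten case analyses, plus a transparent explanation of where the shared trace condition $\sum_a p_{1,a}^2=\sum_a p_{2,a}^2=6\beta$ enters; what it costs is dependence on the specific parametrization (\ref{c(lambda)})--(\ref{roots}), whereas the paper's ordering argument would survive any relabeling of real roots.
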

\begin{proof}
In the vector case, where $\eta_i$ are scalars, we have $\kappa_{2,1,a} \kappa_{1,2,b} = \kappa_{1,1,a} \kappa_{2,2,b}$, hence
\be
  \alpha_{ab} = \kappa_{1,1,a} \, \kappa_{2,2,b} \, \tilde{\alpha}_{ab}  \, , \qquad
  \tilde{\alpha}_{ab} := \frac{(p_{2,3}-p_{1,3}) (p_{1,a}-p_{2,b})}{(p_{1,a}-p_{2,3})(p_{2,b}-p_{1,3})} \, .
     \label{vector_talpha}
\ee
Let us assume $\kappa_{iia} >0$, $i=1,2$, $a=1,2$, and $\alpha_{ab} >0$, $a,b=1,2$. This implies that, 
for $a=1,2$ and $b=1,2$, either 
\bez
    (p_{2,3}-p_{1,3}) (p_{1,a}-p_{2,b}) > 0 \quad \mbox{and} \quad (p_{1,a}-p_{2,3})(p_{2,b}-p_{1,3}) > 0 \, ,
\eez
or
\bez
    (p_{2,3}-p_{1,3}) (p_{1,a}-p_{2,b}) < 0 \quad \mbox{and} \quad (p_{1,a}-p_{2,3})(p_{2,b}-p_{1,3}) < 0 
\eez
holds. Next we observe that, for $i=1,2$, the solutions (\ref{roots}) of the cubic equation satisfy one of the following 
sets of inequalities (also see Fig.~\ref{fig:roots}). 
\begin{enumerate}
\item $p_{i,3} \leq -\sqrt{\beta} \leq p_{i,2} \leq \sqrt{\beta} < p_{i,1}$. 
\item $p_{i,2} \leq -\sqrt{\beta} \leq p_{i,3} \leq \sqrt{\beta} \leq p_{i,1}$.
\item $p_{i,2} \leq -\sqrt{\beta} \leq p_{i,1} \leq \sqrt{\beta} \leq p_{i,3}$. 
\end{enumerate}
This leaves us with nine cases.
In all of them, an analysis of the inequalities leads to a contradiction. 
\end{proof}

As a consequence of the above proposition, if all possible terms are 
present in the $N=2$ function $\tau$ (Example~\ref{ex:superpos_trees}), in the 
vector Boussinesq case the solution has a singularity.
Regular solutions are then only possible if at least one of the coefficients of exponentials 
in $\tau$ vanishes. In order to achieve this, we have to choose special values of some parameters, or special 
relations between parameters. 

To see what happens if one of the coefficients of exponentials in $\tau$ vanishes, let us consider the 
case where $K \xi_{1,2}=0$ (other cases lead to the same conclusions). Then the $N=2$ function $\tau$ reduces to
\be
 \tau &=& \alpha_{1,1} \, e^{\tilde{\vartheta}_{1,1}}
         + \alpha_{1,2} \, e^{\tilde{\vartheta}_{1,2}}  
         + \kappa_{1,1,1} \, e^{\tilde{\vartheta}_{1,3}} 
         + \kappa_{2,2,1} \, e^{\tilde{\vartheta}_{3,1}} 
         + \kappa_{2,2,2} \, e^{\tilde{\vartheta}_{3,2}} 
         + e^{\tilde{\vartheta}_{3,3}}  \nonumber \\
       &=& e^{\vartheta(p_{1,3})} \Big( e^{\vartheta(p_{2,3})} + K \xi_{2,1} \, e^{\vartheta(p_{2,1})} 
           + K \xi_{2,2} \, e^{\vartheta(p_{2,2})} \Big) \nonumber  \\
       &&  + K \xi_{1,1} \, e^{\vartheta(p_{1,1})} \Big( e^{\vartheta(p_{2,3})} 
             + \tilde{\alpha}_{1,1} \, K \xi_{2,1} \, e^{\vartheta(p_{2,1})} 
             + \tilde{\alpha}_{1,2} \, K \xi_{2,2} \, e^{\vartheta(p_{2,2})} \Big) , \label{tree_N=2_degen}
\ee
where we set $\eta_i =1$, which is no restriction of generality in the case under consideration, and 
$\tilde{\alpha}_{1,1}, \tilde{\alpha}_{1,2}$ are defined in (\ref{vector_talpha}). 
In the last expression in (\ref{tree_N=2_degen}), the part in the first brackets corresponds to an $N=1$ 
soliton configuration, cf. Example~\ref{ex:2to1}. 
If the part in the other brackets had the same coefficients, $\tau$ would factorize and we would obtain a 
tropical limit graph which is a superimposition of that of the $N=1$ Y-shaped solution and a single soliton. 
Different coefficients lead to a deformation, introducing phase shifts. 
Fig.~\ref{fig:tree_and_1soliton} shows an example. 

The condition $K \xi_{1,2}=0$ eliminates some phases from the function $\tau$, i.e., we have $\mu_{ab} =0$ 
for some $a,b$. If we choose $\xi_{1,2} = (1,-1)^T$, keeping otherwise the data specified in Fig.~\ref{fig:tree_and_1soliton}, 
then we still have $K \xi_{1,2}=0$, but some $M_{ab}$ is different from zero, although $\mu_{ab} =0$.    

\begin{figure} 
\begin{center}
\begin{minipage}{0.04\linewidth}
\vspace*{.15cm}
\bez
 \begin{array}{cc} t & \\ \uparrow & \\ & \rightarrow x \end{array}                
\eez
\end{minipage}
\includegraphics[scale=.2]{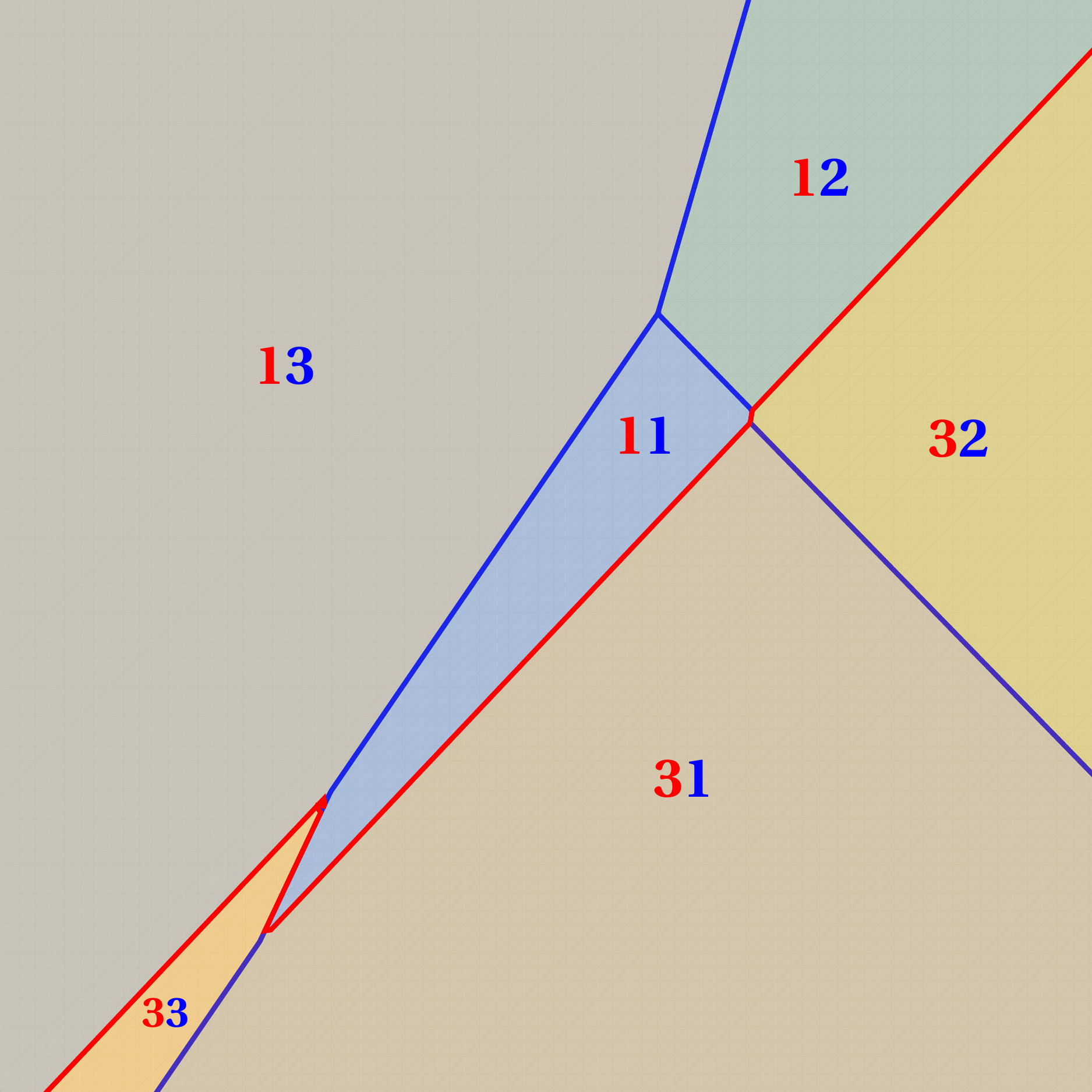} 
\hspace{1cm}
\includegraphics[scale=.2]{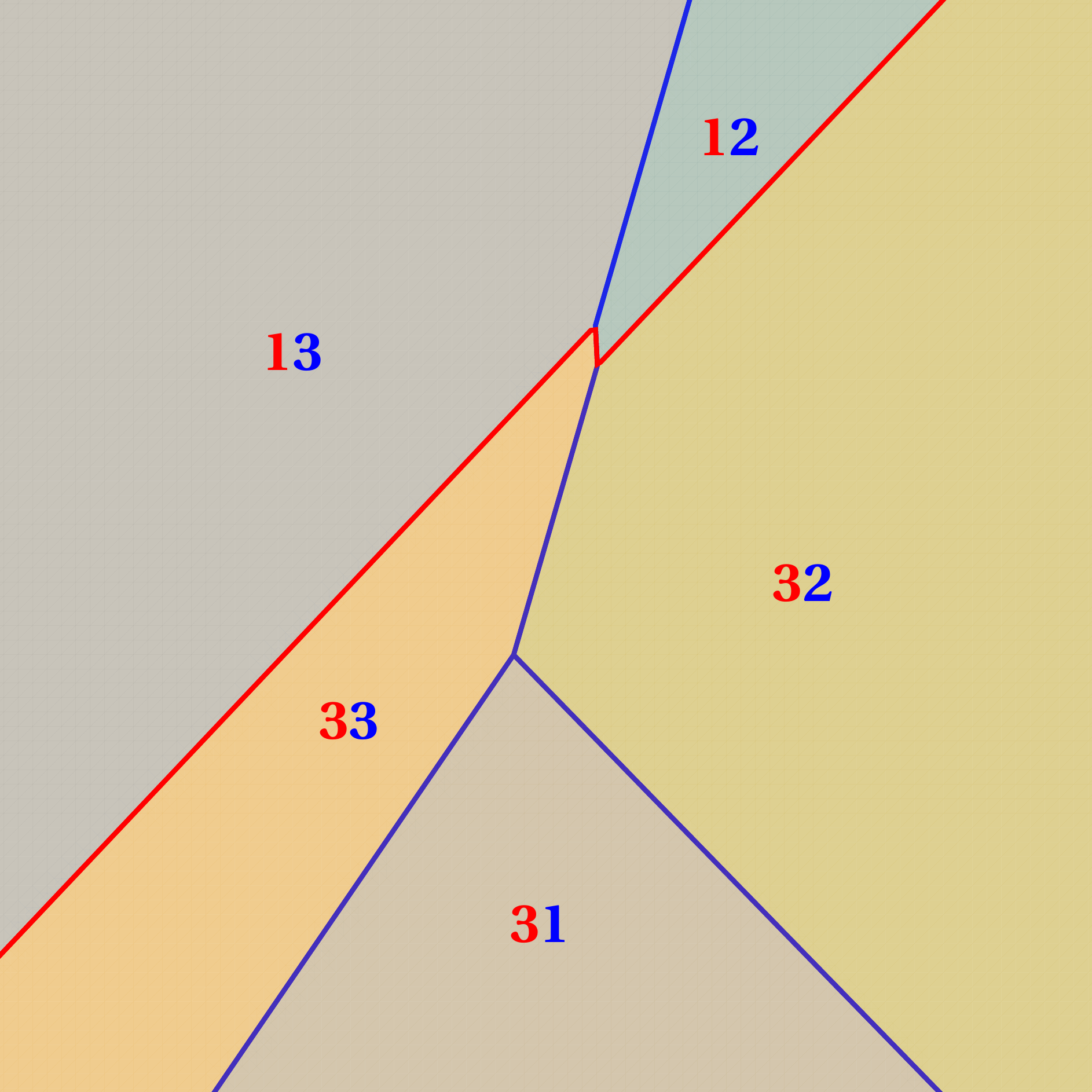} 
\end{center}
\caption{Tropical limit graph of a solution of the $m=2$ vector ($n=1$) Boussinesq equation from the class with $\tau$ 
given by (\ref{tree_N=2_degen}). 
The solution represents a (nonlinear) superposition of a branching soliton (marked blue) and a single soliton (marked red). 
Here we chose $K=(1,1)$, $\lambda_1 = 7/10$, $\lambda_2 = 5$, 
$\xi_{1,1} = (1,0)^T$, $\xi_{1,2} = (0,0)^T$, $\xi_{2,1} = (1,1)^T$, $\xi_{2,2} = (2,1)^T$ and 
$\eta_1 = 10^2$, $\eta_2 = 10^{-2}$, respectively $\eta_1 = 10^{-3}$, $\eta_2 = 10^{-3}$. 
The same graphs are obtained from a solution of the scalar Boussinesq equation, using $K=1$ and 
$\xi_{1,1} = 1$, $\xi_{1,2} =0$, $\xi_{2,1} = 2$, $\xi_{2,2} = 3$. 
\label{fig:tree_and_1soliton} }
\end{figure}

The last proposition tells us that, however small (with respect to a suitable norm) we choose a neighborhood 
of such a regular solution, in the set of solutions, it contains singular solutions. 

An $N>2$ solution locally consists approximately of $N=2$ solutions. We should thus expect the above proposition to 
extend to $N>2$. But we will not attempt to provide a rigorous proof here. In the following we show that 
solutions with $N=3$ exist, where the singularities only appear in a compact region of space-time.

\paragraph{$\boldsymbol{N=3}$.} We set $P_a = \mathrm{diag}(p_a, q_a, r_a)$, $a=1,2,3$, and obtain
\be
     \tau = \sum_{a,b,c=1}^3 \mu_{abc} \, e^{\tilde{\vartheta}_{abc}}   \label{tree_N=3}
\ee
with $\tilde{\vartheta}_{abc} = \vartheta(p_a) + \vartheta(q_a) + \vartheta(r_a)$ and 
\bez
   \mu_{abc} = \frac{(p_3-q_3) (p_3-r_3) (q_3-r_3) (p_a-q_b) 
  (p_a-r_c) (q_b-r_c)}{(p_a-q_3) (p_a-r_3) (q_b-p_3) 
   (q_b-r_3) (r_c-p_3) (r_c-q_3)}
   \frac{\eta_1 \, \eta_2 \, \eta_3 \, \kappa_{1,a} \, \kappa_{2,b} \, \kappa_{3,c}}
   {(-\eta_1 \kappa_{1,a}){}^{\delta_a^3} (-\eta_2 \kappa_{2,b}){}^{\delta_b^3} 
   (-\eta_3 \kappa_{3,c}){}^{\delta_c^3}} \, ,
\eez
where $\delta^i_a$ is the Kronecker symbol. Furthermore, 
\bez
  \Phi_{abc} &=& \frac{(p_a-p_3) (p_a-q_3) (p_a-r_3)}{(p_a-q_b) 
    (p_a-r_c)} \frac{\xi_{1,a}}{\kappa_{1,a}}
    + \frac{(q_b-q_3) (q_b-p_3) (q_b-r_3)}{(q_b-p_a) (q_b-r_c)} \frac{\xi_{2,b}}{\kappa_{2,b}}\\
   && + \frac{(r_c-r_3) (r_c-p_3) (r_c-q_3))}{(r_c-p_a) (r_c-q_b)} \frac{\xi_{3,c}}{\kappa_{3,c}} \, .
\eez

If $\tau$ has also \emph{negative} summands, strictly its tropical limit is not defined. Notwithstanding this, 
we can determine (and plot) the regions where the logarithm of the \emph{absolute value} of a summand 
of $\tau$ dominates. On a boundary segment between such (generalized) dominating phase regions, where 
the corresponding summand in $\tau$ is positive for one and negative for the other, the 
soliton solution is singular. This is so because close to such a boundary segment contributions from 
all other summands are exponentially suppressed, hence negligible. Furthermore, singularities can only 
appear at such a boundary segment, since everywhere else either a single summand of $\tau$ dominates and all others 
are negligible, or we have a boundary segment along which two summands with the same sign have equal 
values and all other summands are negligible. The latter two cases exclude a singularity.

Fig.~\ref{fig:N=3_singular} 
shows an example of such a modified tropical limit graph. The white regions are dominated by phases having 
negative contributions to the $\tau$-function. In these regions the solution is still regular, but on 
the interface between a ``positive'' and a ``negative'' phase region (plotted in red), and only there, 
the solution becomes singular. A similar solution of the scalar Boussinesq equation appeared in 
\cite{Rasi+Schi17} (see Figure 8 therein).

\begin{figure} 
\begin{center}
\begin{minipage}{0.04\linewidth}
\vspace*{.15cm}
\bez
 \begin{array}{cc} t & \\ \uparrow & \\ & \rightarrow x \end{array}                
\eez
\end{minipage}
\includegraphics[scale=.2]{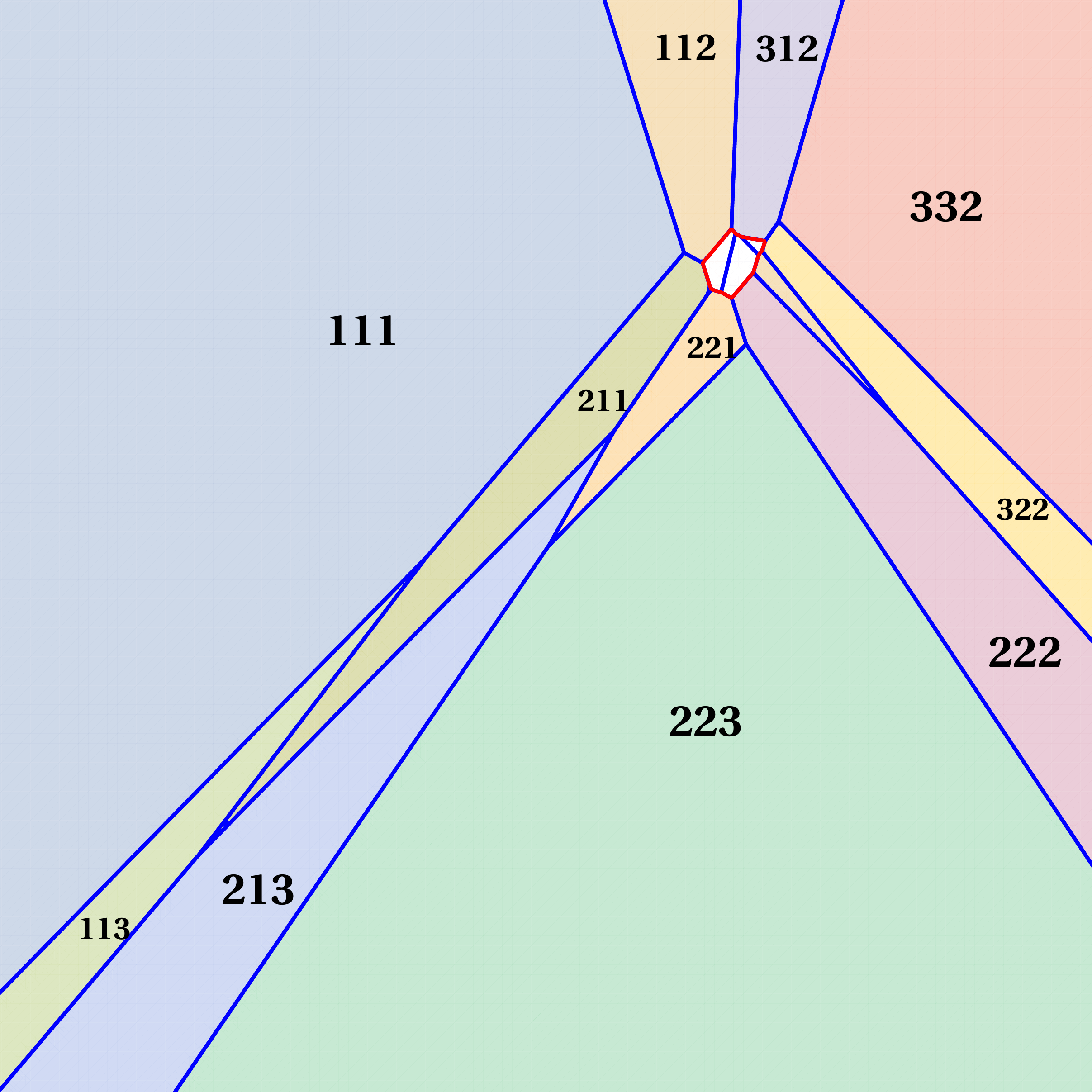} 
\hspace{1cm}
\includegraphics[scale=.2]{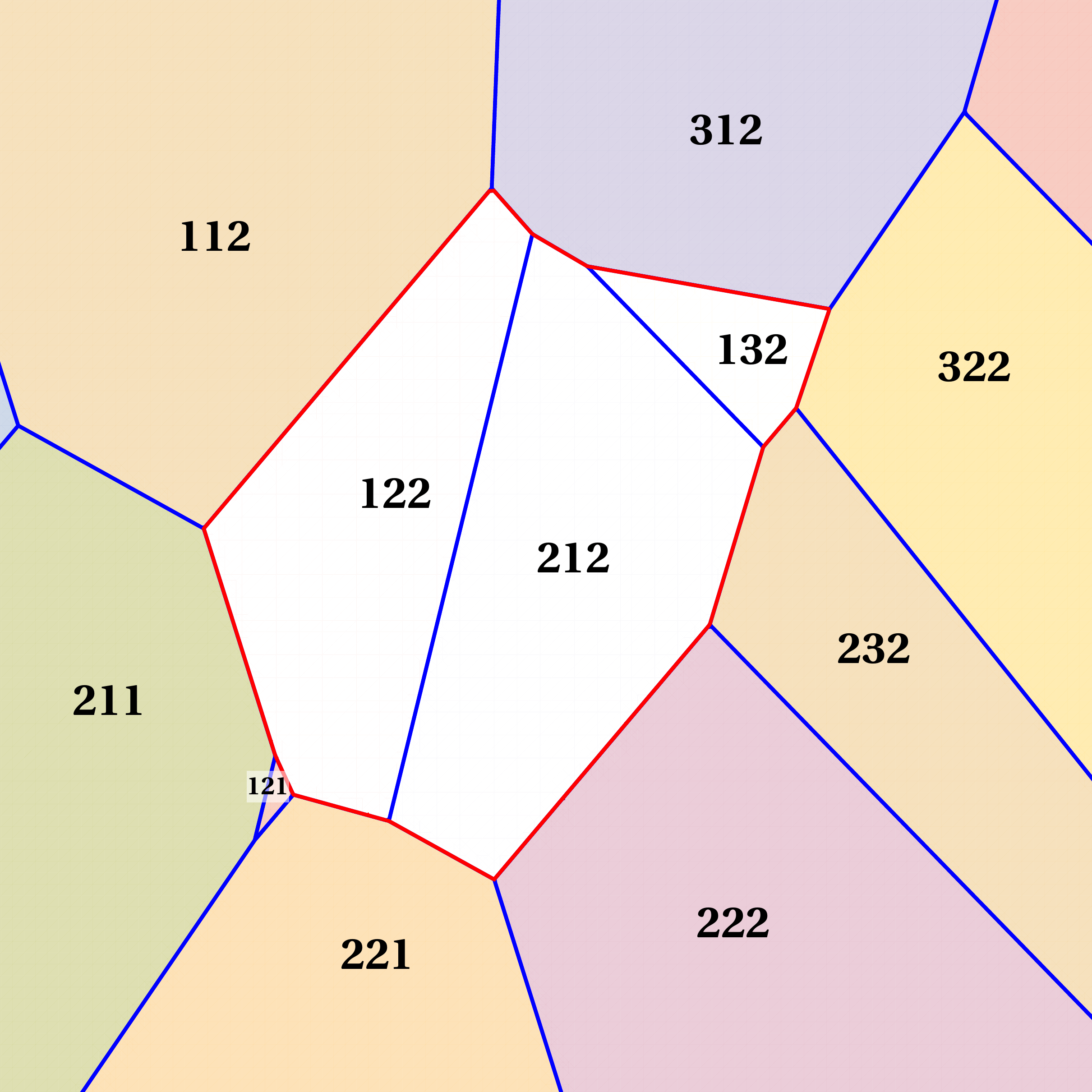} 
\end{center}
\caption{Modified tropical limit graph of a solution of the $m=2$ vector ($n=1$) Boussinesq equation 
from the class with $\tau$ given by (\ref{tree_N=3}). The second plot zooms into the region 
where most of the phases meet. 
There are 6 incoming and 3 outgoing solitons. The solution is singular only on the (red) 
boundary of the white region. 
Here we chose $K=(1,1)$, $\lambda_1 = 1/6$, $\lambda_2 = 1/4$, 
$\xi_{1,1} = (-1/2,0)^T$, $\xi_{2,1} = (1,-1/2)^T$, $\xi_{3,1} = (-3/2,-3/2)^T$,
$\xi_{1,2} = (0,-1)^T$, $\xi_{2,2} = (-3/2,-1)^T$, $\xi_{3,2} = (3/2,-3)^T$, and $\eta_i =1$ for $i=1,2,3$. 
The phases $\tilde{\vartheta}_{3,1,1}$ and $\tilde{\vartheta}_{3,1,3}$ are nowhere dominant, hence do not 
appear in the tropical limit plot. 
\label{fig:N=3_singular} }
\end{figure}

\section{The case $C' \neq -C$}
\label{sec:C'_noteq_-C}
We can use the transformations in Remark~\ref{rem:BDT_transf} to achieve that both, 
$C$ and $C'$, have Jordan normal form. Then (\ref{Omega_0-Sylv_eq}) generically implies $\Omega_0 =0$. 
This will be assumed in the following. Choosing diagonal matrices
\bez
  &&  P_a = \mathrm{diag}(p_{1,a}, \ldots,p_{N,a}) \, , \qquad
    Q_a = \mathrm{diag}(q_{1,a}, \ldots,q_{N,a}) \, , \\
  &&  C = \mathrm{diag}(c_1,\ldots,c_N) \, , \qquad
    C' = \mathrm{diag}(c'_1,\ldots,c'_N)  \, , 
\eez
(\ref{P_sol_cubic_eq}) and (\ref{Q_sol_cubic_eq}) require
\bez
    p_{ia}^3 = 3 \beta p_{ia} + c_i \, , \qquad
    q_{ia}^3 = 3 \beta q_{ia} - c'_i \, .  
\eez
For each $i=1,\ldots,N$, we represent the roots $p_{ia}$, $a=1,2,3$, of the first cubic equation 
as in (\ref{roots}), using a parameter $\lambda_i$. In the same way we represent the roots 
$q_{ia}$, $a=1,2,3$, of the second cubic equation using a parameter $\nu_i$.  
We assume that $p_{ia} \neq p_{jb}$ and $q_{ia} \neq q_{jb}$ for $i \neq j$ or $a \neq b$. 
Now we have
\bez
    \theta = \sum_{a=1}^3 \theta_a \, e^{\vartheta(P_a)} \, , \qquad
    \chi = \sum_{a=1}^3 e^{\vartheta(-Q_a)} \, \chi_a \, .
\eez
Assuming $q_{ib} \neq p_{ja}$ for all combinations of indices, 
from (\ref{Omega_ansatz}) and (\ref{Sylvester}) we obtain
\bez
  \Omega_{ij} = \sum_{a,b=1}^3 \frac{\eta_{ib} K \xi_{ja}}{ q_{ib}-p_{ja} } 
        \, e^{\vartheta(p_{ja})-\vartheta(q_{ib})} \qquad \quad i,j=1,\ldots,N  \, .
\eez

The Bsq$_K$ solution is given by
\bez
         \phi = \frac{F}{\tau} \, , 
\eez
where now
\bez
  && \tau = \det(\Omega) = \sum_{I,J \in \{1,2,3\}^N} \tau_{IJ} \, , \qquad
     \tau_{IJ} = \mu_{IJ} \, e^{\vartheta_{IJ}} \, , \qquad
     \vartheta_{IJ} = \sum_{i=1}^N \Big( \vartheta(p_{i a_i}) - \vartheta(p_{i b_i}) \Big) \, , \\
  && F = - \theta \, \mathrm{adj}(\Omega) \, \chi =:  \sum_{I,J \in \{1,2,3\}^N} \phi_{IJ} \, \tau_{IJ} \, , 
\eez
with constants $\mu_{IJ}$ and constant matrices $\phi_{IJ}$, of a certain structure. 
Here $\mathrm{adj}(\Omega)$ denotes 
the adjugate of the matrix $\Omega$, and we wrote 
$I=(a_1,\ldots,a_N)$, $J=(b_1,\ldots,b_N)$ in the expression for $\vartheta_{IJ}$. If a phase 
$\vartheta_{IJ}$ is present in $\tau$, and if the tropical limit exists, then the tropical 
value of $\phi$ in the corresponding dominating phase region is $\phi_{IJ}$. 
We decompose $\theta_a$ and $\chi_a$ as in (\ref{chi,theta_decomp}).

In the following we restrict our considerations to the case $N=1$.
Then $\Omega$ is a scalar, consisting of at most nine summands. We obtain $\phi = F/\tau$ with 
\bez
   && \tau = \Omega = \sum_{a,b=1}^3 \frac{\kappa_{ba}}{ q_b-p_a } 
        \, e^{\vartheta(p_a)-\vartheta(q_b)} \, , \qquad
      \kappa_{ba} = \chi_b K \theta_a    \, , \\
   && F = - \sum_{a,b=1}^3 \theta_a \otimes \chi_b \, e^{\vartheta(p_a)-\vartheta(q_b)} \, .
\eez
Hence
\bez
    \phi_{ab} = \frac{p_a - q_b}{\kappa_{ba}} \, \theta_a \otimes \chi_b \, .
\eez

In the scalar and vector Boussinesq case, 
one can show that there is no regular solution of the above form, with real parameters and with \emph{all} 
coefficients of exponentials in $\tau$ different from zero. This means that, however small (using 
a suitable norm) we choose a neighborhood of such a regular solution, in the set of solutions, 
it contains singular solutions.  
The first statement is not true in the matrix case. 

\begin{example}
\label{ex:C'=!-C_3x3}
Let $m=n=3$, $K = \mathrm{diag}(1,1,1)$, 
\bez
   && \chi_1 = \left( \begin{array}{ccc} 1 & 0 & 0 \end{array} \right) \, , \quad
    \chi_2 = \left( \begin{array}{ccc} 0 & 1 & 0 \end{array} \right) \, , \quad
    \chi_3 = \left( \begin{array}{ccc} 0 & 0 & 1 \end{array} \right) \, , \\
   && \theta_1 = \left( \begin{array}{r} 1 \\ 1 \\ 1 \end{array} \right) \, , \quad
    \theta_2 = \left( \begin{array}{r} -2 \\ 1 \\ -1 \end{array} \right) \, , \quad
    \theta_3 = \left( \begin{array}{r} -1 \\ 1 \\-1 \end{array} \right) \, , 
\eez
and $\lambda_1 = 1/5$, $\nu_1 = 3/2$. 
The function $\tau$, given above, is then a positive linear combination 
of nine independent exponentials, which is the maximal number. All nine phases, appearing in the 
corresponding function $\tau$, are visible in the tropical limit graph, see Fig.~\ref{fig:3x3}.
The three interior phase regions are hardly expected from the plot on the left hand side and reveal 
a complicated interaction pattern. 
Fig.~\ref{fig:3x3_2} shows another, though simpler example, where the relation is evident.  
\begin{figure} 
\begin{center}
\includegraphics[scale=.3]{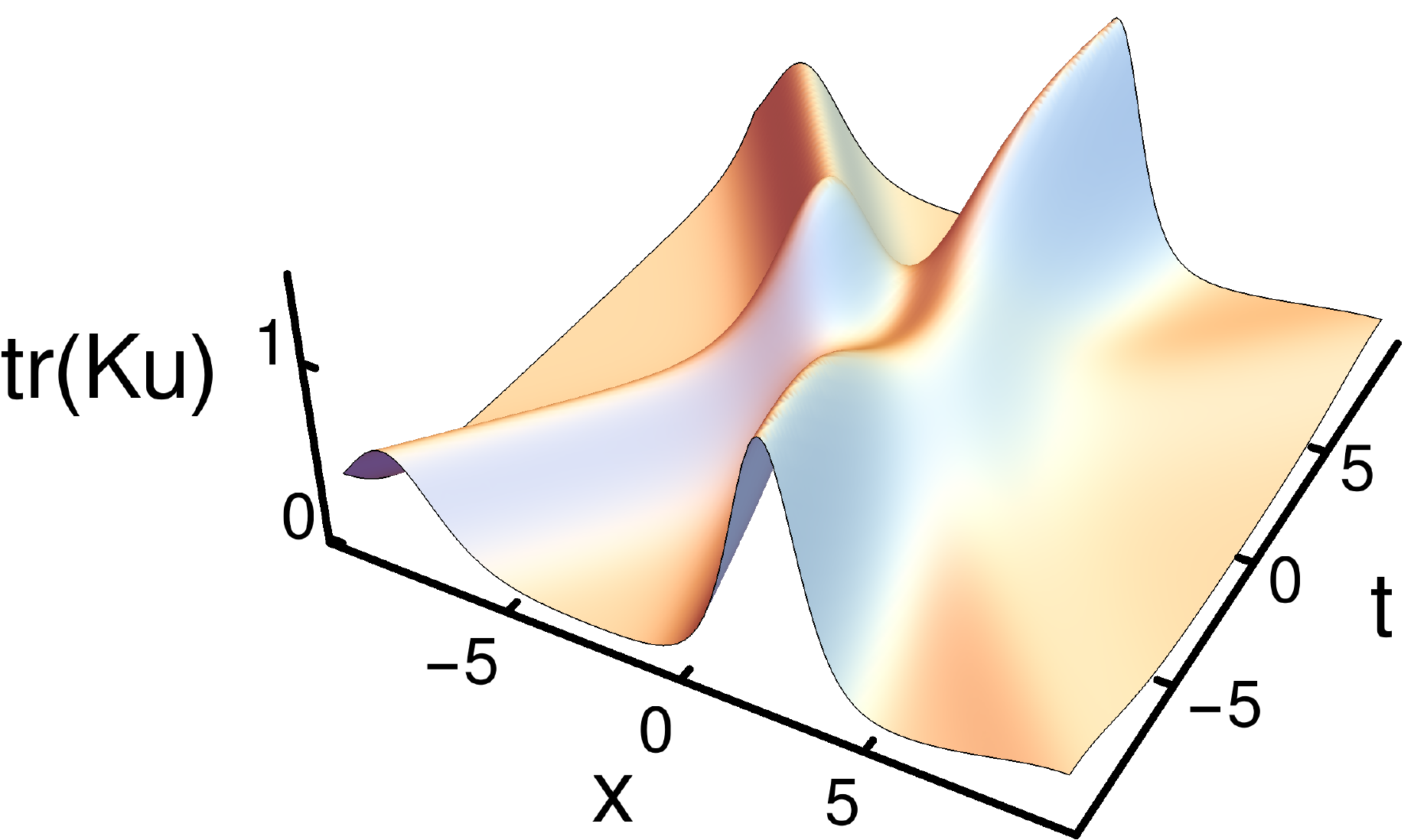} 
\hspace{1cm}
\includegraphics[scale=.18]{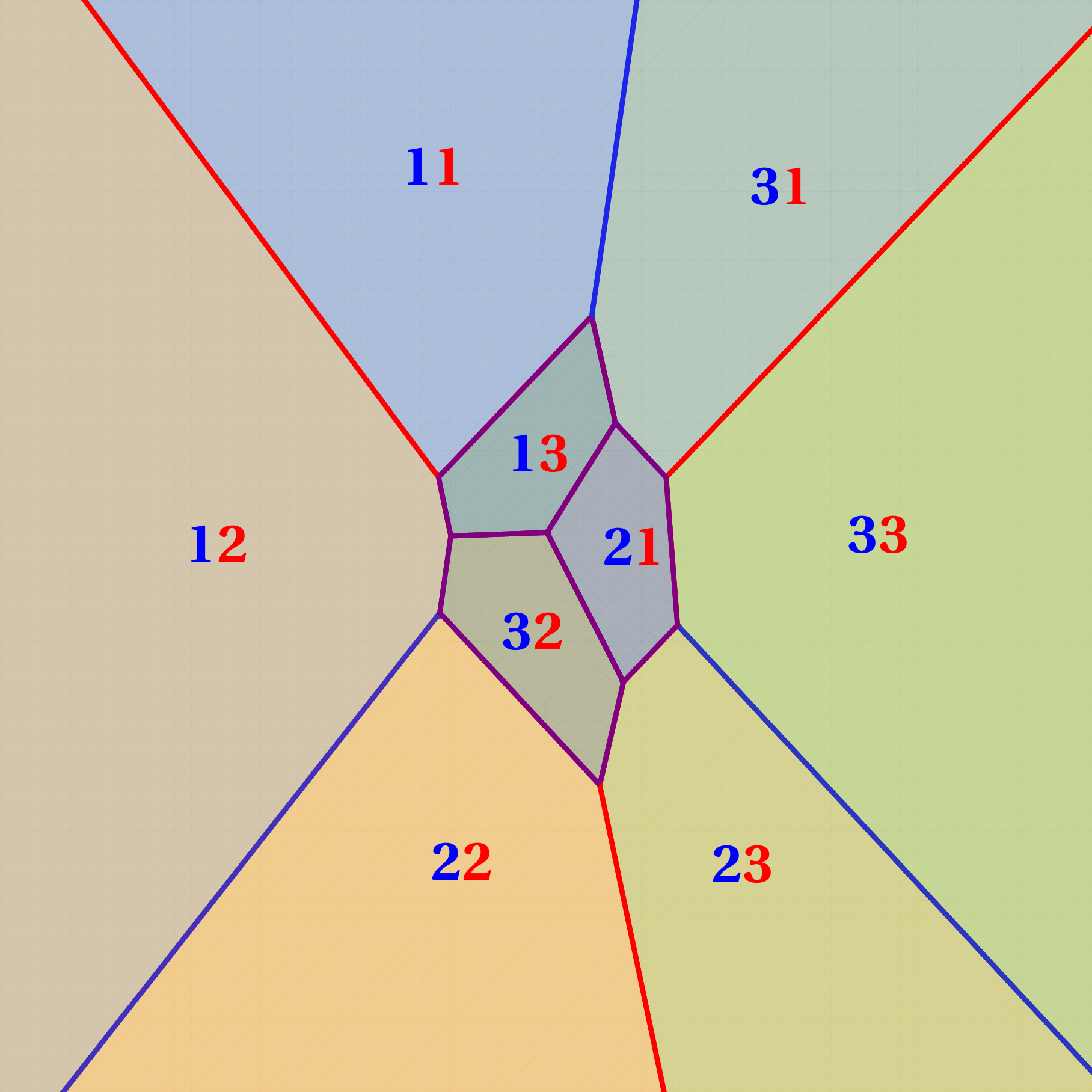}
\end{center}
\caption{Plot of $\mathrm{tr}(K u) = 2 (\log \tau)_{xx}$ and tropical limit graph the $3 \times 3$ matrix 
solution as specified in Example~\ref{ex:C'=!-C_3x3}. This shows a 3-soliton solution solution where none of 
the outgoing solitons has a velocity equal to that of one of the incoming solitons. 
The tropical limit graph shows that this is actually a superposition of a splitting soliton (red Y-shaped) 
and two merging solitons (blue reversed Y-shaped graph). 
\label{fig:3x3} }
\end{figure}
\begin{figure} 
\begin{center}
\includegraphics[scale=.26]{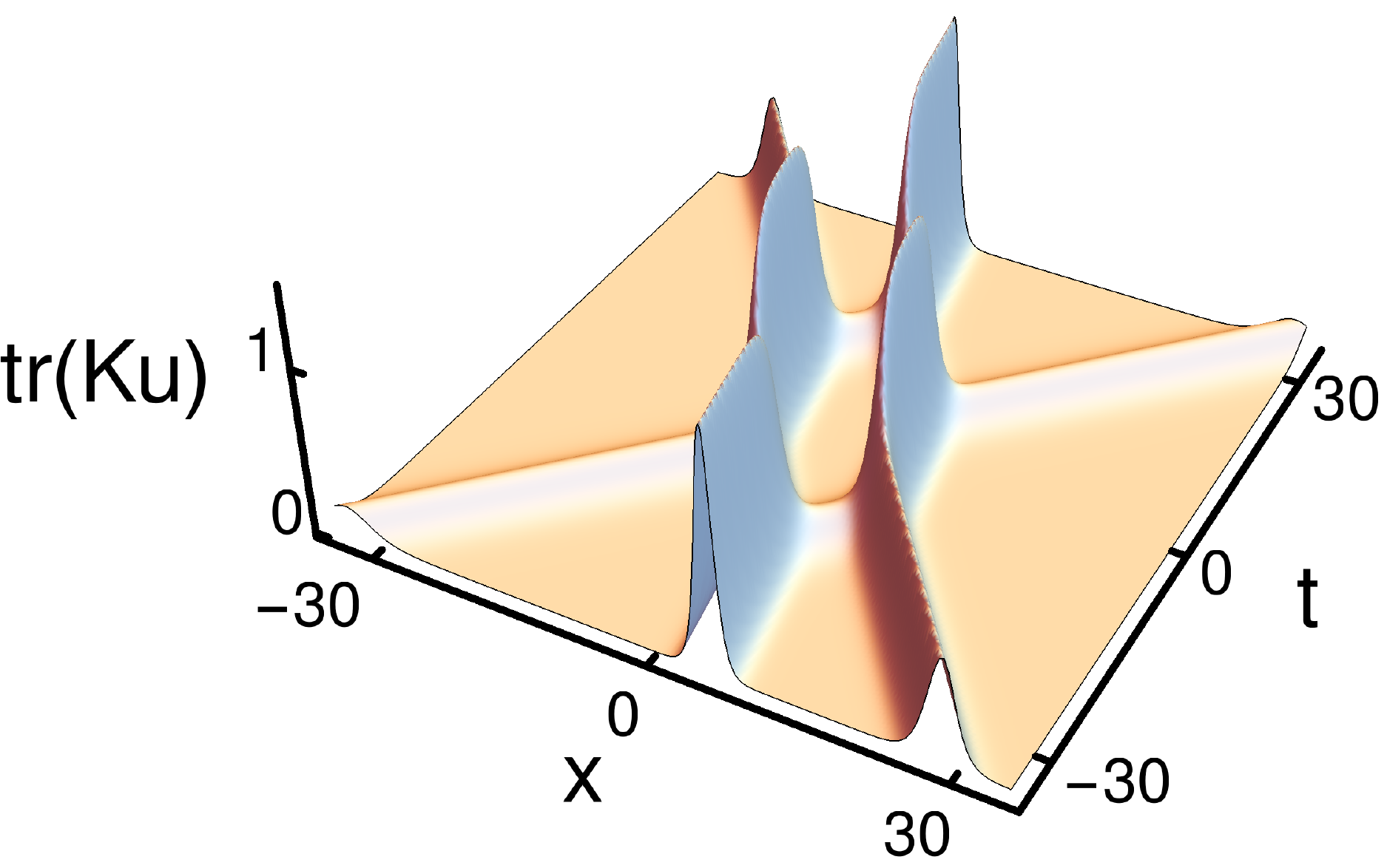} 
\hspace{.5cm}
\includegraphics[scale=.18]{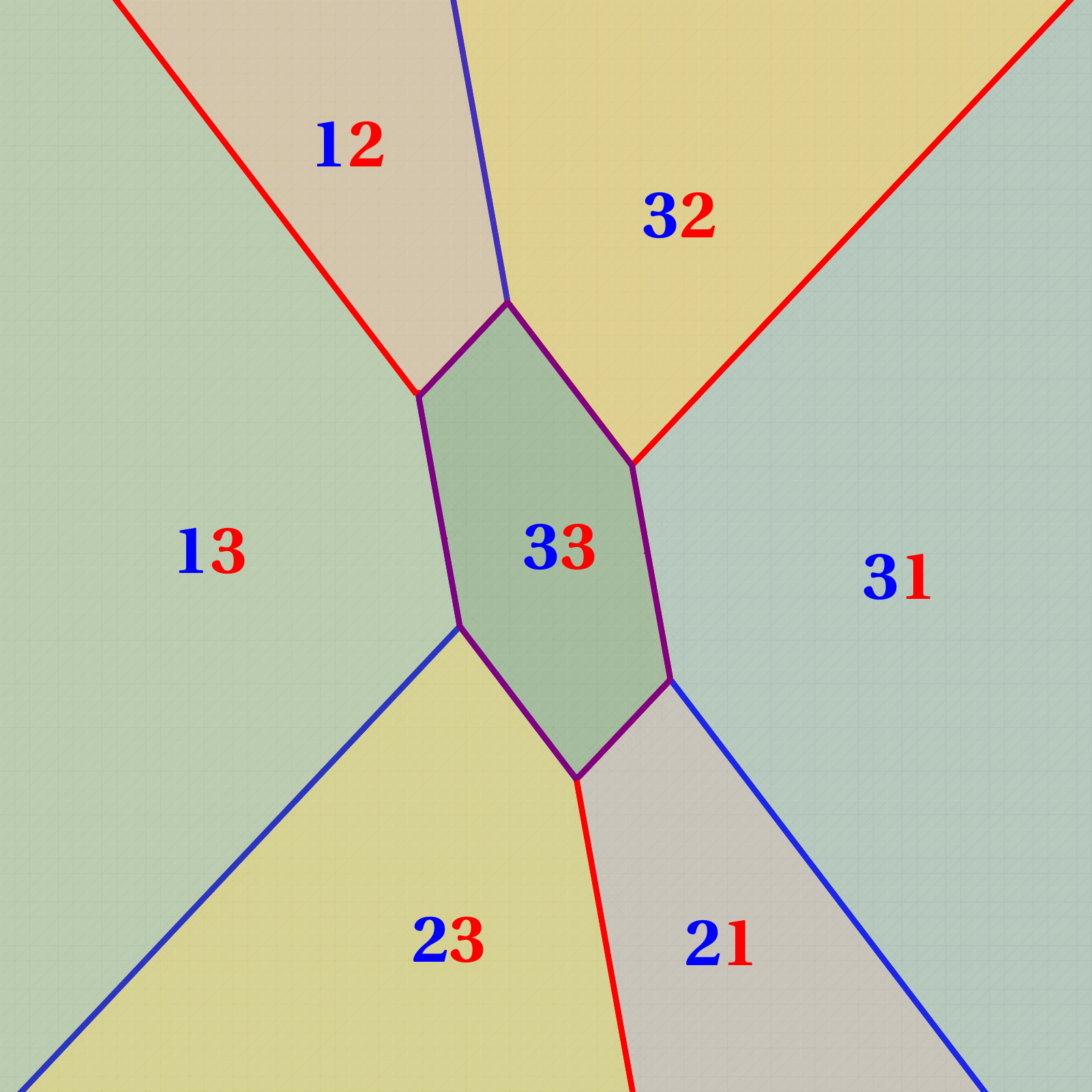} 
\end{center}
\caption{Plot of $\mathrm{tr}(K u) = 2 (\log \tau)_{xx}$ and tropical limit graph of a $3 \times 3$ matrix 
solution, as given in Section~\ref{sec:C'_noteq_-C}, with $N=1$ and 
$\theta_1 = (-1, 1, 1)^T$, $\theta_2 = (-2, 1, 1)$, $\theta_3 = (-1, -1,-1)$,  
$\lambda_1 = 1/10$, $\nu_1 = 1/10 + 10^{-5}$. Only seven of the nine phases, appearing in the 
corresponding function $\tau$, are visible.  
\label{fig:3x3_2} }
\end{figure}
\end{example}

\section{Conclusions}
\label{sec:concl}
The scalar Boussinesq equation exhibits richer soliton interactions than the KdV equation. 
This concerns head-on collisions and inelastic scattering.  
All this is nicely revealed in the tropical limit. Here the function $\tau$ determines 
the solution via $u = 2 \, (\log \tau)_{xx}$. 

In the vector case ($m=1$ or $n=1$), 
we have $\mathrm{tr}(K u) = 2 \, (\log \tau)_{xx}$, and this is a solution of the 
scalar Boussinesq equation. Specifying initial polarizations for sufficiently large negative 
time $t$, i.e., those of the incoming solitons, the distribution of polarizations over 
the tropical limit graph is obtained, for the class of solutions treated in Section~\ref{sec:C'=-C},  
by use of a Yang-Baxter $R$-matrix. For a more general class of solutions, also a tetragon map 
is at work, cf. Section~8 in \cite{DMH18p}. 

For the matrix ($m,n >1$) Boussinesq equation, $\mathrm{tr}(K u) = 2 \, (\log \tau)_{xx}$
is not in general a solution of the scalar equation. A \emph{non}linear Yang-Baxter map is 
at work, which is a reduction of the corresponding matrix KP Yang-Baxter map, 
recently obtained in \cite{DMH18}.  

We should note that a tropical limit of a matrix soliton solution can only be expected 
if we arrange the parameters such that exponentials absent in $\tau$ are also absent in 
the numerator of $u$. 
Otherwise the solution will exhibit exponential growth (in some space-time direction), in which 
case the solution should no longer be called a soliton solution. 

In the scalar and vector case, inelastic collisions are subject to severe restrictions. 
As expected, there is more freedom in the full matrix case. 

We concentrated on \emph{regular} solutions, for which all summands of $\tau$ are positive, so that 
the tropical limit, defined via Maslov dequantization, of $\tau$ makes sense. We did not have to 
compute this limit directly, however, since instead it is easier to determine the dominating phase regions 
and then the respective boundaries, which constitute the tropical limit graph. 
The latter point of view allows a generalization 
of the tropical limit, which also applies to \emph{singular} solutions, as explained at the end of 
Section~\ref{subsec:vBsq}. This allows us to determine, in a simple way, the locus of singularities 
of a solution. 

Nonlinear atomic or molecular chains, modeled by the scalar Boussinesq equation in a continuum limit, 
disregard a possible polarization of the particles. Taking polarizations or spins into account, 
such a chain might be described in the continuum limit by a vector or matrix version of 
the Boussinesq equation.  

Similar explorations, as done for matrix Boussinesq equations in this work, should be possible 
for other integrable equations too, in particular for discrete versions of the Boussinesq equation 
(see \cite{NPCQ92,Tong+Nijh05,Maru+Kaji10}, for example). 

\vspace{.3cm}
\noindent
\textbf{Acknowledgment.} X.-M.C. has been supported by the DAAD Research Grants - Short-Term Grants 2018 (57378443).

\renewcommand{\theequation} {\Alph{section}.\arabic{equation}}
\renewcommand{\thesection} {\Alph{section}}

\makeatletter
\newcommand\appendix@section[1]{
  \refstepcounter{section}
  \orig@section*{Appendix \@Alph\c@section: #1}
  \addcontentsline{toc}{section}{Appendix \@Alph\c@section: #1}
}
\let\orig@section\section
\g@addto@macro\appendix{\let\section\appendix@section}
\makeatother

\begin{appendix}

\section{Derivation of the binary Darboux transformation for the matrix Boussinesq equation}
\label{app:DBT}
We recall a binary Darboux transformation result of bidifferential calculus \cite{DMH13SIGMA,CDMH16}. 

\begin{theorem}
Let $(\Omega,\d, \bd)$ be a bidifferential calculus and $\Delta, \Gamma, \boldsymbol{\lambda}, \boldsymbol{\kappa}$ 
solutions of
\bez
  &&  \bd \Gamma = \Gamma \, \d \Gamma + [\boldsymbol{\kappa} , \Gamma] \, , \qquad
    \bd \boldsymbol{\kappa} = \Gamma \, \d \boldsymbol{\kappa} + \boldsymbol{\kappa}^2 \, ,  \\
  &&  \bd \Delta = (\d \Delta) \, \Delta - [\boldsymbol{\lambda} , \Delta] \, ,  \qquad
     \bd \boldsymbol{\lambda} = (\d \boldsymbol{\lambda}) \, \Delta - \boldsymbol{\lambda}^2 \, ,
\eez
and $\phi_0$ a solution of 
\be
      \d \bd \phi + \d \phi \, K \, \d \phi = 0 \, ,   \label{bDT_phi_eq}
\ee
where $\d K = 0 = \bd K$. 
Let $\theta$ and $\chi$ be solutions of the linear system
\be
     \bd \theta = (\d \phi_0) \, K \, \theta + (\d \theta) \, \Delta + \theta \, \boldsymbol{\lambda} \, ,
       \label{bDT_lin_sys}
\ee
respectively the adjoint linear system
\be
    \bd \chi = - \chi \, K \, \d \phi_0 + \Gamma \, \d \chi + \boldsymbol{\kappa} \, \chi \, .
       \label{bDT_adj_lin_sys}
\ee
Let $\Omega$ solve the compatible linear system
\be
  &&  \Gamma \, \Omega - \Omega \, \Delta = - \eta \, K \, \theta \, , \nonumber \\
  &&  \bd \Omega = (\d \Omega) \, \Delta - (\d \Gamma) \, \Omega + (\d \eta) \, K \, \theta
                   + \boldsymbol{\kappa} \, \Omega + \Omega \, \boldsymbol{\lambda} \, .   \label{bDT_Omega_eqs}
\ee
Where $\Omega$ is invertible, 
\be
     \phi = \phi_0 - \theta \, \Omega^{-1} \chi     \label{bDT_new_solution}
\ee
is a new solution of (\ref{bDT_phi_eq}). \hfill $\Box$
\end{theorem}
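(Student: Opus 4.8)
The plan is to verify the equation $(\ref{bDT_phi_eq})$ for $\phi = \phi_0 - \theta\,\Omega^{-1}\chi$ by a direct, if lengthy, computation, using throughout the three defining identities of a bidifferential calculus, $\d^2 = \bd^2 = \d\bd + \bd\d = 0$, together with the graded Leibniz rule (since $\theta,\chi,\Omega$ are $0$-forms, $\d$ and $\bd$ distribute over their products by the ordinary product rule, and signs appear only when a derivation is moved past the $1$-forms produced). Writing $\psi := \theta\,\Omega^{-1}\chi$, so that $\phi = \phi_0 - \psi$ and $\d\bd\phi_0 = -(\d\phi_0)\,K\,(\d\phi_0)$ by hypothesis, the statement reduces to the identity $\d\bd\psi = -(\d\phi_0)\,K\,(\d\psi) - (\d\psi)\,K\,(\d\phi_0) + (\d\psi)\,K\,(\d\psi)$, and it is this identity that I would establish.

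The first step is purely formal: from $\d(\Omega^{-1}) = -\Omega^{-1}(\d\Omega)\,\Omega^{-1}$ and its $\bd$-analogue, expand $\d\psi$ and $\bd\psi$ into terms of the shapes $(\d\theta)\,\Omega^{-1}\chi$, $\theta\,\Omega^{-1}(\d\Omega)\,\Omega^{-1}\chi$, $\theta\,\Omega^{-1}(\d\chi)$ and their $\bd$-counterparts. The second step eliminates the $\bd$-derivatives of the basic quantities: substitute $\bd\theta$ from the linear system $(\ref{bDT_lin_sys})$, $\bd\chi$ from the adjoint linear system $(\ref{bDT_adj_lin_sys})$, and $\bd\Omega$ from the second of the $\Omega$-equations $(\ref{bDT_Omega_eqs})$. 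This turns $\bd\psi$ into an expression containing $\d\phi_0$, $\d$-derivatives of $\theta,\chi,\Omega,\Gamma$, and the matrices $\Gamma,\Delta,\boldsymbol\kappa,\boldsymbol\lambda$. The terms $(\d\phi_0)\,K\,\theta\,\Omega^{-1}\chi = (\d\phi_0)\,K\,\psi$ and $-\,\theta\,\Omega^{-1}\chi\,K\,(\d\phi_0) = -\,\psi\,K\,(\d\phi_0)$ coming from the two linear systems are exactly what will supply, after $\d$ is applied, the first two terms of the target identity. To deal with the matrices $\Gamma,\Delta$ now sitting next to $\Omega^{-1}$, use the algebraic $\Omega$-equation $\Gamma\,\Omega - \Omega\,\Delta = -\eta\,K\,\theta$: conjugating it yields $\Delta\,\Omega^{-1} = \Omega^{-1}\Gamma + \Omega^{-1}\eta\,K\,\theta\,\Omega^{-1}$, which lets one replace the awkward factors by already-present ones plus a quadratic remainder of the type $(\cdots)\,\Omega^{-1}\eta\,K\,\theta\,\Omega^{-1}(\cdots)$; this remainder is what ultimately generates the $(\d\psi)\,K\,(\d\psi)$ contribution.

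The third step is to apply $\d$ to the expression obtained for $\bd\psi$ and to collect. Most terms vanish by $\d^2 = 0$; the mixed ones become products of two $1$-forms. Here the Riccati-type relations $\bd\Gamma = \Gamma\,\d\Gamma + [\boldsymbol\kappa,\Gamma]$, $\bd\boldsymbol\kappa = \Gamma\,\d\boldsymbol\kappa + \boldsymbol\kappa^2$, $\bd\Delta = (\d\Delta)\,\Delta - [\boldsymbol\lambda,\Delta]$, $\bd\boldsymbol\lambda = (\d\boldsymbol\lambda)\,\Delta - \boldsymbol\lambda^2$ — precisely the integrability conditions that make the linear systems and the $\Omega$-system compatible — are used to cancel all terms still carrying $\boldsymbol\kappa$ or $\boldsymbol\lambda$, along with the extra $\d\Gamma$ and $\d\Delta$ contributions. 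After these cancellations, together with $\d\,\d\phi_0 = 0$ and the algebraic $\Omega$-equation applied once more on the quadratic remainder, what survives is exactly $\d\bd\psi = -(\d\phi_0)\,K\,(\d\psi) - (\d\psi)\,K\,(\d\phi_0) + (\d\psi)\,K\,(\d\psi)$, i.e.\ the desired identity, hence $(\ref{bDT_phi_eq})$ for $\phi$.

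I expect the \emph{main obstacle} to be the bookkeeping of the third step: there are many terms with graded signs, and identifying the precise grouping that reconstitutes $-\,\d\phi\,K\,\d\phi$ — rather than leaving a residue — is the crux. A secondary subtlety is keeping $\eta$ and $\d\eta$ consistent with the algebraic $\Omega$-equation throughout, so that the quadratic remainder matches the quadratic part $(\d\psi)\,K\,(\d\psi)$ of the target identity exactly. Since the statement is quoted from \cite{DMH13SIGMA,CDMH16}, in practice the cleanest route is to reduce it to the verification carried out there; the argument above follows the same pattern and makes the role of each hypothesis explicit.
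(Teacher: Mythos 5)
The paper itself does not prove this theorem: it is stated with a terminal $\Box$ and explicitly imported from \cite{DMH13SIGMA,CDMH16}, so there is no in-paper argument to compare yours against. Your strategy --- reduce the claim to $\d\bd\psi = -(\d\phi_0)\,K\,\d\psi - (\d\psi)\,K\,\d\phi_0 + (\d\psi)\,K\,\d\psi$ for $\psi=\theta\,\Omega^{-1}\chi$, eliminate all $\bd$-derivatives via the two linear systems and the second $\Omega$-equation, trade $\Delta\,\Omega^{-1}$ for $\Omega^{-1}\Gamma$ plus a quadratic remainder using the Sylvester-type equation, then apply $\d$ and cancel with the Riccati relations for $\Gamma,\boldsymbol{\kappa},\Delta,\boldsymbol{\lambda}$ --- is the standard route by which such binary Darboux theorems in bidifferential calculus are verified, and your reduction of the target identity is correct.

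As a proof, however, it is still a blueprint: the entire content of the theorem lives in the step you defer. You assert, but do not show, that after the substitutions all terms carrying $\boldsymbol{\kappa}$, $\boldsymbol{\lambda}$, $\d\Gamma$ and $\d\Delta$ cancel and that the quadratic remainder reassembles into exactly $(\d\psi)\,K\,(\d\psi)$; that cancellation \emph{is} the theorem, so the argument is incomplete as written. Two concrete points if you carry it out. First, the $\eta$ appearing in the $\Omega$-system must be identified with the adjoint eigenfunction $\chi$ (a notational slip in the statement; compare the reduction $\Omega_x=-\chi K\theta$ in Section~2) --- otherwise the remainder terms of the shape $\theta\,\Omega^{-1}\eta\,K\,\theta\,\Omega^{-1}\chi$ do not recombine into $\psi\,K\,\psi$-type expressions and the quadratic part of the target identity cannot close. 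Second, the compatibility of the $\Omega$-system, which the theorem asserts, is used implicitly when you apply $\d$ to the substituted expression for $\bd\Omega$; it is precisely where the Riccati equations enter and should be invoked or checked rather than absorbed silently into ``bookkeeping''. Since the result is quoted verbatim from \cite{DMH13SIGMA,CDMH16}, deferring to the proof given there (with the $\eta=\chi$ identification made explicit) is the appropriate resolution, and your outline is consistent with that proof.
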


In the above theorem, we have to assume that all objects are such that the corresponding products are defined 
and that $\d$ and $\bd$ can be applied. 
Next we define a bidifferential calculus via
\bez
   && \d f = [\partial_x , f ] \, \zeta_1 + \frac{1}{2} [\partial_t + \partial_x^2 , f] \, \zeta_2  \, , \\
   && \bd f = \frac{1}{2} [\partial_t - \partial_x^2 , f] \, \zeta_1 
      + [\beta \, \partial_x - \frac{1}{3} \partial_x^3 , f] \, \zeta_2 \, , 
\eez
on the algebra $\cA = \cA_0[\partial_t,\partial_x]$, where $\cA_0$ is the algebra of smooth functions 
of two variables, $x$ and $t$, and $\partial_x$ is the operator of partial differentiation with 
respect to $x$. $\zeta_1, \zeta_2$ are a basis of a two-dimensional vector space $V$, from which we form 
the Grassmann algebra $\Lambda(V)$. $\d$ and $\bd$ extend to $\Omega = \cA \otimes \Lambda(V)$ 
in a canonical way, and to matrices with entries in $\Omega$. The equation (\ref{bDT_phi_eq}) 
is then equivalent to the matrix potential Bsq$_K$ equation (\ref{matrixBSq_K}). 
Choosing a solution $\phi_0$ and setting
\bez
     \Delta = \Gamma = - \partial_x \, , \quad
     \boldsymbol{\lambda} = - \frac{1}{3} C \, \zeta_2 \, , \quad
     \boldsymbol{\kappa} = - \frac{1}{3} C' \, \zeta_2 \, ,
\eez
the linear system (\ref{bDT_lin_sys}) and the adjoint linear system (\ref{bDT_adj_lin_sys}) 
lead to (\ref{Bsq_lin_sys}) and (\ref{Bsq_adj_lin_sys}), respectively. 
Furthermore, (\ref{bDT_Omega_eqs}) implies (\ref{Bsq_Omega}). According to the theorem, (\ref{bDT_new_solution}) 
yields a new solution of the matrix potential Bsq$_K$ equation (\ref{matrixBSq_K}). 

\end{appendix}


\begin{thebibliography}{10}

\bibitem{Bous1872}
Boussinesq J 1872 Th\'eorie des ondes et des remous qui se propagent le long
  d'un canal rectangulaire horizontal, et communiquant au liquide contenu dans
  ce canal des vitesses sensiblement pareilles de la surface au fond {\em J.
  Math. Pures Appl. 2$^e$ s\'erie\/} {\bf 17} 55--108

\bibitem{Pnev85}
Pnevmatikos S 1985 {\em Singularities \& Synamical Systems\/} ({\em
  North-Holland Mathematics Studies\/} vol 103) ed Pnevmatikos S (Elsevier
  Science Publishers B.V. (North Holland)) pp 397--437

\bibitem{Dick03}
Dickey L 2003 {\em Soliton Equations and Hamiltonian {S}ystems\/}
  (Singapore: World Scientific)

\bibitem{Bogd+Zakh02}
Bogdanov L and Zakharov V 2002 The Boussinesq equation revisited {\em Physica
  D\/} {\bf 165} 137--162

\bibitem{Rasi+Schi17}
Rasin A and Schiff J 2017 B\"acklund transformations for the Boussinesq
  equation and merging solitons {\em J. Phys. A: Math. Theor.\/} {\bf 50}
  325202

\bibitem{FST83}
Fal'kovich G, Spector M and Turitsyn S 1983 Destruction of stationary solutions
  and collapse in the nonlinear string equation {\em Phys. Lett. A\/} {\bf 99}
  271--274

\bibitem{Lamb+Muse89}
Lambert F and Musette M 1989 On soliton instabilities for the nonlinear string
  equation {\em Journal de Physique Colloques\/} {\bf 50 (C3)} C3--33--C3--38

\bibitem{DMH18}
Dimakis A and M\"uller-Hoissen F 2018 Matrix KP: tropical limit and Yang-Baxter maps, 
 {\em Lett. Math. Phys.} \url{https://doi.org/10.1007/s11005-018-1127-3}

\bibitem{DMH18p}
Dimakis A and M\"uller-Hoissen F 2017 Matrix Kadomtsev-Petviashvili equation: tropical limit,
  Yang-Baxter and pentagon maps, {\em Theor. Math. Phys.} {\bf 196} 1164--1173

\bibitem{Vese03}
Veselov A 2003 Yang-Baxter maps and integrable dynamics {\em Phys. Lett. A\/}
  {\bf 314} 214--221

\bibitem{Gonc+Vese04}
Goncharenko V and Veselov A 2004 {\em New Trends in Integrability and Partial
  Solvability\/} ({\em NATO Science Series II: Math. Phys. Chem.\/} vol 132) ed
  Shabat A {\em et~al.\/} (Dordrecht: Kluwer) pp 191--197

\bibitem{Tsuch04}
Tsuchida T 2004 $N$-soliton collision in the Manakov model {\em Progr.
  Theor. Phys.\/} {\bf 111} 151--182

\bibitem{APT04IP}
Ablowitz M, Prinari B and Trubatch A 2004 Soliton interactions in the vector
  NLS equation {\em Inv. Problems\/} {\bf 20} 1217--1237

\bibitem{DMH13SIGMA}
Dimakis A and M\"uller-Hoissen F 2013 Binary Darboux transformations in
  bidifferential calculus and integrable reductions of vacuum Einstein
  equations {\em SIGMA\/} {\bf 9} 009

\bibitem{CDMH16}
Chvartatskyi O, Dimakis A and M\"uller-Hoissen F 2016 Self-consistent sources
  for integrable equations via deformations of binary Darboux transformations
  {\em Lett. Math. Phys.\/} {\bf 106} 1139--1179

\bibitem{DMH11KPT}
Dimakis A and M\"uller-Hoissen F 2011 KP line solitons and Tamari lattices
  {\em J. Phys. A: Math. Theor.\/} {\bf 44} 025203

\bibitem{Hiro+Ito83}
Hirota R and Ito M 1983 Resonance of solitons in one dimension {\em J. Phys.
  Soc. Japan\/} {\bf 52} 744--748

\bibitem{Koda10}
Kodama Y 2010 KP solitons in shallow water {\em J. Phys. A: Math. Theor.\/}
  {\bf 43} 434004

\bibitem{NPCQ92}
Nijhoff F, Papageorgiou V, Capel H and Quispel G 1992 The lattice
  Gel'fand-Dikii hierarchy {\em Inv. Problems\/} {\bf 8} 597--621

\bibitem{Tong+Nijh05}
Tongas A and Nijhoff F 2005 The Boussinesq integrable system: compatible
  lattice and continuum structures {\em Glasgow Math. J.\/} {\bf 47A} 205--219

\bibitem{Maru+Kaji10}
Maruno K~I and Kajiwara K 2010 The discrete potential Boussinesq equation and
  its multisoliton solutions {\em Appl. Anal.\/} {\bf 89} 593--609

\end{thebibliography}
\end{document}